\renewcommand\thetable{\Roman{table}}
\definecolor{light-gray}{gray}{0.8}
\newcommand{\indep}{\rotatebox[origin=c]{90}{$\models$}}
\newtheorem{lemma}{Lemma}
\renewcommand\@biblabel[1]{#1.}
\DeclareMathOperator{\Var}{Var}
\begin{document}
\address{\affilnum{a} Department of Statistics, University of Michigan, Ann Arbor, MI 48109, U.S.A.\\
	\affilnum{b} School of Information, University of Michigan, Ann Arbor, MI 48109, U.S.A.}
\corraddr{Peng Liao, 439 West Hall, 1085 South University Ave, Ann Arbor, MI 48109, U.S.A.}
\runninghead{SAMPLE SIZE CALCULATIONS FOR MICRO-RANDOMIZED TRIALS IN MHEALTH}	
\title{Sample Size Calculations for Micro-randomized Trials in mHealth}	
\author{Peng Liao, \affil{a}\corrauth \footnotemark[2] Predrag Klasnja, \affil{b} Ambuj Tewari\affil{a} and Susan A. Murphy\affil{a}}
	
\begin{abstract}
The use and development of mobile interventions are experiencing rapid growth. In \lq\lq just-in-time'' mobile interventions, treatments are provided via a mobile device and they are intended to help an individual make healthy decisions \lq\lq in the moment,'' and thus have a proximal, near future impact. Currently the development of mobile interventions is proceeding at a much faster pace than that of associated  data science methods.  A first step toward developing data-based methods is to provide an experimental design for testing the proximal effects of these just-in-time treatments. In this paper, we propose a \lq\lq micro-randomized'' trial design for this purpose. In a micro-randomized trial, treatments are sequentially randomized throughout the conduct of the study, with the result that each participant may be randomized at the 100s or 1000s of occasions at which a treatment  might be provided. Further, we develop a test statistic for assessing the proximal effect of a treatment as well as an associated sample size calculator. We conduct simulation evaluations of the sample size calculator in various settings. Rules of thumb that might be used in designing a micro-randomized trial are discussed. This work is motivated by our collaboration on the HeartSteps mobile application designed to increase physical activity.
\end{abstract}
	
\keywords{Mirco-randomized Trial, Sample Size Calculation, mHealth}
	
\maketitle	

\footnotetext[2]{\textit{E-mail: pengliao@umich.edu. }}	

\section{Introduction}
	The use and development of mobile interventions are experiencing rapid growth. Mobile interventions are used across the health fields and include treatments to improve HIV medication adherence \citep{Lewis2013, KaplanStone2013}, to increase activity \citep{King2013}, supplement counseling/pharmacotherapy in treatment for substance use \citep{Marsch2012,Boyer2012}, reinforce abstinence in addictions \citep{AlssiandPetry2013, Cucciare2012} and to support recovery from alcohol dependence \citep{Gustafson2014, Quanbeck2014}. Mobile interventions for adherence to anti-retroviral therapy and smoking cessation have shown sufficient effectiveness and replicability in trials and have been recommended for inclusion in health services \citep{Free2013}.
	
	However, as  Nilsen \textit{et al.} \cite{{Nilsen2012}} state, \lq\lq In fact, the development of mHealth technologies is currently progressing at a much faster pace than the science to evaluate their validity and efficacy, introducing the risk that ineffective or even potentially harmful or iatrogenic applications will be implemented.\rq\rq Indeed reviews, while reporting preliminary evidence of effectiveness, call for more programmatic, data-based approaches to constructing mobile interventions \citep{Free2013, Muessig2013}.  In particular, these reviews call for research that focuses on data-informed development of these complex multi-component interventions prior to their evaluation in standard randomized controlled trials. But methods for using data to inform the design and evaluation of adaptive mobile interventions have lagged behind the use and deployment of these interventions \citep{MetzandNilsen2014, Nilsen2012, Kumar2013}.
	
	Many mobile interventions are designed to be \lq\lq just-in-time'' interventions, meaning that they intend to provide treatments that help an individual make healthy decisions in the moment, such as engaging in a desirable behavior (e.g., taking a medication on time) or effectively coping with a stressful situation. As such, mobile interventions are often intended to have proximal, near-term effects. A first approach toward developing data-based methods for evaluation of mobile health interventions is to provide an experimental design for testing the proximal effects of the  treatments. This paper proposes a micro-randomized trial design for this purpose.  In a micro-randomized trial, treatments are sequentially randomized throughout the conduct of the study, with the result that each participant may be randomized at the hundreds or thousands of occasions at which a treatment  might be provided.  This repeated randomization of treatments under investigation enables causal modeling of each treatment's time-varying proximal effect as well as modeling of time-varying effect moderation. Thus, the micro-randomized trial can be seen as a  first experimental step in the development of effective mobile interventions that are composed of sequences of treatments. We propose to size the trial to detect the  proximal main effect of the treatments. This is akin to the use of factorial designs for use in constructing multi-component interventions.  In these factorial designs \citep{Box1978, Chakraborty2009},  a first analysis often involves testing if the main effect of each treatment is equal to 0.
	
	This work is motivated by our collaboration on the HeartSteps mobile application for increasing physical activity, which we will use to illustrate our discussion. One of the treatments in HeartSteps is suggestions for physical activity which are tailored to the person's current context. HeartSteps can deliver these suggestions at any of the five time intervals during the day, which correspond roughly to morning commute, mid-day, mid-afternoon, evening commute, and post-dinner times. When a suggestion is delivered, the user's phone plays a notification sound, vibrates and lights up, and the suggestion is displayed on the lock screen of the phone. These suggestions encourage activity in the current context and are intended to have an effect (getting a person to walk) within the next hour.
	
	In the following section, we introduce the micro-randomized trial design. In section 3 we precisely define the proximal main effect of a treatment,  using the language of potential outcomes.  We develop the test statistic for assessing the proximal effect of a treatment as well as an associated sample size calculator in section 4 and 5.  Next we provide simulation evaluation of the sample size calculator. We end, in Section 7, with a discussion.
	
\section{Micro-Randomized Trial}
	
	In general an individual's longitudinal data, recorded via mobile devices that sense and provide treatments, can be written as
	\begin{align*}
	\{S_0, S_1,  A_1,  S_2,   A_2,  \dots, S_t, A_t,\ldots,S_T, A_T, S_{T+1}\}
	\end{align*}
	where,
	$t$ indexes  decision times,
	$S_0$ is a vector of baseline information (gender, ethnicity, etc.) and
	$S_t(t\geq 1)$ is  information collected between time $t-1$ and $t$ (e.g., summary measures of recent activity levels, engagement, and burden; day of week; weather; busyness indicated by smartphone calendar, etc.). The treatment  at time $t$ is denoted by  $A_t$; throughout this paper we consider binary options for the treatments (e.g., the treatment is on or off). The proximal response, denoted by $Y_{t+1}$, is a known function of $\{S_t, A_t, S_{t+1}\}$. Here we assume that the longitudinal data are independent and identically distributed across $N$ individuals. Note that this assumption would be violated, if for example, some of the treatments are used to enhance social support between individuals in the study.
	
	In HeartSteps, data ($S_t$) is collected both passively via sensors and via participant self-report.  Each participant is provided a \lq\lq Jawbone\rq\rq~band, worn at the wrist, which collects daily step count and the amount of sleep the user had the previous night. Furthermore sensors on the phone are used to collect a variety of information at each of the 5 time points during the day, including  the time-stamp, location, busyness of planned activities on the phone calendar and  other activity on the phone.  Each evening, self-report data is collected including utility and burden ratings.  The proximal response, $Y_{t+1}$, for activity suggestions is the step count in the hour following time $t$.
	
	A decision time is a point in time at which---based on participant's current state, past behavior, or current context---treatment may need to be delivered. Decision times vary by the nature of the intervention component. In HeartSteps, the decision times for activity suggestions are 5 times per day over the 42 day study duration. For an alcohol-recovery application that provides an intervention when an individual goes within 10 feet of a high risk location (e.g., a liquor store), decision points might be every 1 minute, the frequency at which the application would get the person's current location and assess whether she is close to a high-risk location.  In a long-term study of an intervention for multiple health behaviors, the decision points might be weekly or monthly at which times, decisions are made regarding  whether to change the focus from one behavior (e.g., physical activity) to another (e.g., diet). Finally, in many studies there is an option for an individual to press a "panic\rq\rq button, indicating the need for help; for such interventions, decision times correspond to times at which the panic button is pressed.
	
	A  micro-randomized trial is a trial in which at each decision time $t$, participants are randomized to a treatment option, denoted by $A_t$.   Treatment options may correspond to whether or not a treatment is provided at a decision time; for example in HeartSteps, whether or not the individual is provided a lock-screen activity suggestion. Or treatment options may be alternative types of treatment that can be provided at the same decision time; for example, a daily step goal treatment might have two options, a fixed 10,000-steps-a-day goal or an adaptive goal based on the user's activity level on the previous day. Considerations of treatment burden often imply that the randomization will not be uniform.   For example in HeartSteps, the randomization probability is 0.4,  so that, if an individual is always available, on average   $2$ lock-screen activity messages are delivered  per day.

	In designing, that is, determining the sample size for, a micro-randomized trial we focus on the reduced longitudinal data
	\begin{align*}
	\{S_0,  I_1, A_1, Y_2,  I_2,  A_2, Y_3, \dots, I_t, A_t, Y_{t+1},\ldots, I_T, A_T, Y_{T+1}\}.
	\end{align*}
	The variable, $I_t$ is an  \lq\lq availability\rq\rq indicator. The availability indicator is coded as $I_t=1$ if the individual is available for treatment and $I_t=0$ otherwise.  At some decision times feasibility, ethics or burden considerations mean that the individual is unavailable for treatment and thus $A_t$ should not be delivered.  Consider again HeartSteps: if sensors indicate that the individual is likely driving a car or the individual is currently walking, then the lock-screen activity message  should not be sent.  Other examples of when individuals are unavailable for treatment include: in the alcohol recovery setting, an \lq\lq warning\rq\rq treatment  would only be potentially provided when sensors indicate that the individual is within 10 feet of a high risk location or  a treatment might only be provided if the individual reports a high level of craving.  If the application has a panic button, then only in an $x$ second interval in which the panic button is pressed is it appropriate to  provide \lq\lq panic button\rq\rq treatments.   Individuals may be unavailable for treatment by choice.   For example, the HeartSteps application permits the individual to turn off the lock-screen activity messages; this option is considered critical to maintaining participant buy-in and engagement with HeartSteps.  After viewing the lock-screen activity message, the individual has the option of turning off the lock-screen messages for  4 , 8 or 12 hours.  After the specified time interval, the delivery of lock-screen messages automatically turns on again. To summarize, the availability indicator at time $t$ is the indicator for the subpopulation at time $t$ among which we are interested in assessing the proximal main effect of the treatment;   {\it  we are uninterested in assessing the proximal main effect of a treatment among individuals for whom it is unethical to provide treatment or for whom it makes no scientific sense to provide treatment or among those who refuse to be provided a treatment. }

\section{Proximal Main Effect of a Treatment}
	
	As discussed above, treatments in mobile health interventions are often designed so as to have a proximal effect (e.g., increase activity in near future, help an individual manage current cravings for drugs or food, take medications on schedule, etc.).   As a result, a first question in developing a mobile health intervention is whether the treatments have a proximal effect.  Here we develop sample size formulae that guarantee a stated  power to detect the  proximal effect of a treatment.  In particular we aim to test if the  proximal main effect is zero.
	
	To define the proximal main effect of a treatment, we  use potential outcomes \citep{Rubin1978,Robins1986, Robins1987}.  Our use of potential outcome notation is slightly more complicated than usual because treatment can only be provided when an individual is available.  As a result, we index the potential outcomes by decision rules that incorporate availability.   In particular define $d(a,i)$ for $a\in\{0,1\}, \ i\in \{0,1\}$ by $d(a,0)=$\lq\lq unavailable-do nothing\rq\rq and $d(a,1)=a$.  Then for each ${a}_1 \in \mathcal{A}_1 = \{0, 1\}$, define $D_1(a_1)=d(a_1,I_1)$.   Then  we denote the potential proximal responses {following decision time $1$} by $\{Y_2^{D_1(1)},\ Y_2^{D_1(0)}\}$     and denote the potential availability indicators at decision time $2$ by $\{I_2^{D_1(1)},\ I_2^{D_1(0)}\}$.
	Next for each $\bar{a}_2=(a_1, a_2)$ with  $a_1, a_2 \in \{0, 1\}$, define $D_2(\bar a_2)=d(a_2,I_2^{D_{1}(a_{1})})$.  Define $\overline{D_2(\bar a_2)}=(D_1(a_1), D_2(\bar a_2))$.   A potential  proximal response following decision time $2$ and corresponding to $\bar a_2$ is  $Y_3^{\overline{D_{2}(\bar a_{2})}}$ and a potential  availability indicator at decision time $3$ is  $I_3^{\overline{D_{2}(\bar a_{2})}}$.
	Similarly, for each $\bar{a}_t=(a_1, \dots, a_t) \in \mathcal{A}_t = \{(a_1, \dots, a_t) \big| a_i \in \{0, 1\}, i = 1, \dots, t \}$, define $D_t(\bar a_t)=d(a_t,I_t^{\overline{D_{t-1}(\bar a_{t-1})}})$ and $\overline{D_t(\bar a_t)}=(D_1(a_1), \ldots, D_t(\bar a_t))$.   For each $\bar{a}_t=(a_1, \dots, a_t) \in \mathcal{A}_t$, the potential proximal response is  $Y_t^{\overline{D_{t-1}(\bar a_{t-1})}}$ (following decision time $t-1$) and potential  availability indicator is  $I_t^{\overline{D_{t-1}(\bar a_{t-1})}}$ at decision time $t$.

	We define the proximal main effect of a treatment at time $t$ among available individuals  by:
	\begin{align*}
	\beta(t) =E\left( Y_{t+1}^{\overline{D_{t}(\bar A_{t-1},1)}} -Y_{t+1}^{\overline{D_{t}(\bar A_{t-1},0)}}\bigg| I_t^{\overline{D_{t-1}(\bar A_{t-1})}} = 1\right)
	\end{align*}
	where the expectation is taken with respect to the distribution of the potential outcomes and randomization in $\bar A_{t-1}$.  This proximal effect is conditional in that the effect of treatment at time $t$ is defined for only  individuals available for treatment  at time $t$, that is,  {$I_t^{\overline{D_{t-1}(\bar A_{t-1})}} = 1$}. This proximal effect is a main effect in that the effect is marginal over any effects of $\bar A_{t-1}$.  The former conditional aspect of the definition is related to the concept of  viable or feasible dynamic treatment regimes \citep{Wang2012, Robins2004} in which one assesses only the causal effect of  treatments that can actually be provided.

	Consider the proximal main effect, $\beta(t)$, as $t$ varies across time.  $\beta(t)$ may vary across time for a variety of reasons. To see this consider the case of HeartSteps.   Here  $\beta(t)$ might initially increase with increasing $t$ as participants learn and practice the activities  suggested on the lock-screen.  For larger $t$ one might expect to see decreasing or flat $\beta(t)$ due to habituation (participants begin to, at least partially, ignore the messages).  This time variation in $\beta(t)$ can be attributed to both the immediate effect of a lock-screen activity  message as well as interactions between the past lock-screen activity messages and the present activity message; the time variation occurs at least partially  due to the marginal character of $\beta(t)$.    Alternately the conditional definition of $\beta(t)$ means that the effect is only defined among the population of individuals who are available at decision time $t$. Changes in this population may cause changes in $\beta(t)$ across time.   Again consider HeartSteps.  At earlier time points, participants may be highly engaged, yet have not developed habits that in various ways increase their activity, thus most participants will be available.   However as time progresses, some participants may develop sufficiently positive activity habits or anticipate activity suggestions, thus at later decision times these participants may be already active and thus unavailable to receive a suggestion.   Other participants  may become increasing disengaged and repeatedly turn off the lock-screen activity messages; these participants are also unavailable.   Thus as time progresses, $\beta(t)$ may vary due to the subpopulation of participants among whom it is appropriate to assess the effect of the lock-screen activity messages.

	Our main objective in determining the sample size will be to assure sufficient power to detect alternatives to the null hypothesis of no proximal main effect, H$_0:\beta(t)=0,\ t=1,\ldots T$ for a trial with $T$ decision points (if $\beta(t)$ is nonzero then for the population available at decision time $t$, there is a proximal effect).  The proposed test  will be focused on detecting smooth, i.e., continuous in $t$, alternatives to this null hypothesis.
	
	To express $\beta(t)$ in terms of the observed data distribution, we assume consistency  \citep{Robins1986, Robins1987}.  This assumption is that for each $t$, the observed $Y_{t}$ and observed $I_t$ equal the corresponding potential outcomes, $Y_{t}^{\overline{D_{t-1}(\bar a_{t-1})}}$, $I_t^{\overline{D_{t-1}(\bar a_{t-1})}}$ whenever $\bar A_{t-1}=\bar a_{t-1}$.
	This assumption may be violated if some of the treatments promote social linkages between participants, for example, to enhance social/emotional support or to compete in mobile games. In these cases it would be more appropriate to additionally index each individual's potential outcomes by other participants' treatments.
	The micro-randomization plus the consistency assumption implies that the proximal main effect of treatment at time $t$ among available individuals, $\beta({t})$ can be written as,
	\begin{align*}
	\beta(t) &=E\big[ Y_{t+1}^{\overline{D_{t}(\bar A_{t-1},1)}} \big| I_t^{\overline{D_{t-1}(\bar A_{t-1})}} = 1\big] -E\big[ Y_{t+1}^{\overline{D_{t}(\bar A_{t-1},0)}} \big| I_t^{\overline{D_{t-1}(\bar A_{t-1})}} = 1\big]\\
	& = E\big[ Y_{t+1}^{\overline{D_{t}(\bar A_{t-1},1)}} \big| I_t^{\overline{D_{t-1}(\bar A_{t-1})}} = 1, A_t=1\big] -E\big[ Y_{t+1}^{\overline{D_{t}(\bar A_{t-1},0)}} \big| I_t^{\overline{D_{t-1}(\bar A_{t-1})}} = 1, A_t=0\big]\\
	& = E\big[ Y_{t+1}^{\overline{D_{t}(\bar A_{t})}} \big| I_t^{\overline{D_{t-1}(\bar A_{t-1})}} = 1, A_t=1\big] -E\big[ Y_{t+1}^{\overline{D_{t}(\bar A_{t})}} \big| I_t^{\overline{D_{t-1}(\bar A_{t-1})}} = 1, A_t=0\big]\\
	& = E[ Y_{t+1}| I_t = 1, A_t=1 ] -E[ Y_{t+1}| I_t = 1, A_t=0 ]
	\end{align*}
	where the second equality follows from the randomization of the $A_t$'s and the last equality follows from the consistency assumption.

\section{Test Statistic}
	Our sample size formula is  based on a test statistic for use in testing  H$_0:\beta(t)=0,\ t=1,\ldots T$ against a scientifically plausible alternative. This alternative should be formed based on conversations with domain experts.    Here we construct a test statistic to detect alternatives that are, at least approximately, linear in a vector parameter, $\beta$, that is, alternatives of the form $Z_t'\beta$, where the $p\times 1$ vector, $Z_t$, is a function of $t$ and covariates that are unaffected by treatment such as time of day or day of week.  In the case of HeartSteps, a plausible alternative is quadratic:
	\begin{eqnarray}
	\label{beta}
	Z_t' \beta = \Big(1, \Big\lfloor \frac{t-1}{5}\Big\rfloor,(\Big\lfloor \frac{t-1}{5}\Big\rfloor)^2\Big)\beta
	\end{eqnarray}
	where $\beta = (\beta_1,\beta _2 ,\beta _3 )'$ ($p=3$).  Recall that in HeartSteps there are 5 decision times per day; $\lfloor \frac{t-1}{5}\rfloor$ translates decision times $t$ to days.  This rather simplistic parametrization marginalizes across the day and treats the weekends and weekdays similarly.
	
	We propose to use the alternate, {H$_1: \beta(t)=Z_t'\beta$, $t = 1,\ldots,T$} to construct the test statistic.  We base the test statistic on the estimator of $\beta$ in a least squares fit of a working model. A simple working model based on the alternative is:
	\begin{align}
	\label{lsfit}
	E[Y_{t+1}|I_t=1, A_t]= B_t'\alpha  + (A_t -\rho_t)Z_t'\beta
	\end{align}
	over all $t\in \{1,\dots, T\}$, where $\rho_t$ is the known randomization probability ($P[A_t=1]=\rho_t$) and the $q\times 1$ vector $B_t$ is a function of $t$ and covariates that are unaffected by treatment such as time of day or day of week.  Note that $A_t$ is centered by subtracting off the randomization probability; thus the working model for $\alpha(t) = E[Y_{t+1}|I_t=1]$ is $
	B_t'\alpha$. The estimators $\ \hat\alpha,\ \hat\beta$ minimize the least squares error:
	\begin{align}
	\mathbb{P}_N\left\{\sum_{t=1}^T I_t\left(Y_{t+1}-
	B_t'\alpha -
	(A_t -\rho_t)Z_t'\beta\right)^2\right\}
	\label{ls}
	\end{align}
	where $\mathbb{P}_N \big\{ f(X) \big\}$ is defined as the average of $f(X)$ over the sample.
	
	Note that from a technical perspective,  minimizing the least squares criterion, (\ref{ls}), is reminiscent of a GEE analysis \citep{GEE} with identity link function and a working correlation matrix equal to the identity.  Thus it is natural to consider a non-identity working correlation matrix as is common in GEE.  This, however, is problematic from a causal inference perspective.   To see this suppose that the true conditional expectation is in fact $E\left[Y_{t+1}|I_t=1, A_t\right]=B_t'\alpha^* +
	(A_t -\rho_t)Z_t'\beta^*$, that is, the causal parameter, $\beta(t)$ is equal to $Z_t'\beta^*$.  Further suppose that the working correlation matrix has off-diagonal elements and that we estimate $\beta^*$ by minimizing  the weighted (by the inverse of the working correlation matrix) least squares criterion.   In this case the resulting estimating equations include sums of terms such as
	$ I_t  \left(Y_{t+1}-B_t'\alpha -(A_t -\rho_t)Z_t'\beta\right)I_s(A_s-\rho_t)Z_s$  for $t>s$.  Unfortunately, both  availability at time $t$, $I_t$, as well as $Y_{t+1}$ may be affected by treatment in the past (in particular, $A_s$), thus absent strong assumptions $E\left[I_t  \left(Y_{t+1}-B_t'\alpha^* -(A_t -\rho_t)Z_t'\beta^*\right)I_s(A_s-\rho_t)\right]$ is unlikely to be 0.   Recall that a minimal condition for consistency of  estimators of $(\alpha^*,\beta^*)$ is that the estimating equations have expectation 0, thus absent further assumptions, the estimators derived from the weighted least squares criterion are likely biased.
	Another possibility is to include a time-varying variance term in the least squares criterion, that is the $t$th entry in (\ref{ls}) might be weighted by $\sigma_t^{-2}$.  This would be useful in the data analysis, however for sample size calculations,  values of these variances are unlikely to be available.  Thus for  simplicity we use the unweighted least squares criterion in (\ref{ls}).
	
	Assume that the matrices $Q=\sum_{t=1}^TE[I_t]\rho_t(1-\rho_t)Z_tZ_t'$ and $\sum_{t=1}^TE[I_t]B_tB_t'$ are invertible. The least squares estimators, $\hat\alpha$, $\hat\beta$ are consistent estimators of
	\begin{equation}
	\tilde\alpha=
	\left(\sum_{t=1}^TE[I_t]B_tB_t'\right)^{-1}\sum_{t=1}^TE[I_t]\alpha(t)B_t
	\label{tilde.alpha}
	\end{equation} and
	\begin{equation}
	\tilde\beta=
	\left(\sum_{t=1}^TE[I_t]\rho_t(1-\rho_t)Z_tZ_t'\right)^{-1}\sum_{t=1}^TE[I_t]\rho_t(1-\rho_t)\beta(t)Z_t
	\label{tilde.beta}
	\end{equation} respectively. Furthermore if $\beta(t)$ is in fact equal to $Z_t'\beta$ for some $\beta$, then $Z_t'\tilde\beta=\beta({t})$. This is the case even if $E[Y_{t+1}|I_t=1]\neq
	B_t'\tilde\alpha$.  In the Appendix (Lemma \ref{lse}), we prove these results and also show that, under moment conditions,  $\sqrt{N}(\hat\beta-\tilde\beta)$ is asymptotically normal with mean $0$ and variance $\Sigma_\beta=Q^{-1}WQ^{-1}$ where,
	\[
	W=  E\left[\Big(\sum_{t=1}^T \tilde{\epsilon}_tI_t(A_t -\rho_t)Z_t\Big) \times \Big(\sum_{t=1}^T\tilde{\epsilon}_tI_t(A_t -\rho_t)Z_t'\Big)\right]
	\]
	and $\tilde{\epsilon}_t=Y_{t+1}-I_tB_t'\tilde\alpha -(A_t -\rho_t)I_tZ_t'\tilde\beta$. To test the null hypothesis H$_0:\beta(t)=0,\ t=1,\dots, T$,
	one can use a test statistic based on the alternative, e.g.,
	\begin{align}
	\label{teststat}
	N\hat\beta'\hat\Sigma_\beta^{-1}\hat\beta
	\end{align}
	where $\hat\Sigma_\beta=\hat Q^{-1}\hat W\hat Q^{-1}$ and $\hat Q$ and $\hat W$ are plug in estimators. Note that this test statistic results from a GEE analysis with identity link function and a working correlation matrix equal to the identity matrix for which sample size formulae have been developed \citep{Tu2004}. We build on this work as follows. As Tu \textit{et al.} \citep{Tu2004} discuss, under the null hypothesis the large sample distribution of this statistic is a chi-squared with $p$ degrees of freedom  distribution. If $N$, the sample size, is small, then, as recommended by Mancl and DeRouen \cite{ManclandDeRouen2001}, we make small adjustments to improve the small sample approximation to the distribution of the test statistic.  In particular, they recommend adjusting $\hat W$ using the \lq\lq hat\rq\rq~matrix; see the formulae for the adjusted $\hat W$ as well as $\hat Q$  in Appendix~\ref{sec:A1}.  Also in small sample settings, investigators commonly  suggest that instead of  using a critical value based on the chi-squared distribution, a critical value based on the $t-$distribution should be used \citep{LiRedden}.  As we are considering a simultaneous test for multiple parameters we form the critical value based on Hotelling's $T-$squared distribution \citep{Hotelling}. Hotelling's $T-$squared distribution is a multiple of the $F$ distribution given by $\frac{d_2}{d_1(d_1+d_2-1)} F_{d_1, d_2}$; here we use $d_1=p$ and $d_2= N-q-p$ (recall $q$ is the  number of parameters in the nuisance parameter vector, $\alpha$); see the appendix for a rationale.
	In the following, the rejection region for the test of H$_0:\beta(t)=0,\ t=1,\ldots T$ based on  (\ref{teststat}) is
	\begin{align*}
		\left\{ N\hat\beta'\hat\Sigma_\beta^{-1}\hat\beta > \frac{N-q-p}{p(N-q-1)} F_{p, N-q-p}^{-1}\left(1-\alpha_0\right)\right\}
	\end{align*}
	where $\alpha_0$ is the desired significance level.
	
	\section{Sample Size Formulae}
	
	As Tu \textit{et.al} \citep{Tu2004} have developed general sample size formulas in the GEE setting, here we focus on considerations specific to the setting of micro-randomized trials. To size the study, we will determine the sample size needed to detect the alternate, $\beta(t)$ with:
	\begin{align*}
	\text{H}_1:\beta(t)/\bar\sigma = d(t), t = 1,\ldots, T
	\end{align*}
	where $\bar{\sigma}^2=(1/T)\sum_{t=1}^T E\left[\Var\left(Y_{t+1}\big|
	I_t=1,A_t\right)\right]$ is the average  variance and $d(t)$ is a standardized treatment effect. When $N$ is large and  H$_1$ holds, $N\hat \beta^{'}\hat\Sigma_{\beta}^{-1}\hat \beta$ is approximately distributed as a noncentral chi-squared $\chi_p^2(c_N)$, where $c_N$, the non-centrality parameter, satisfies $c_N = N(\bar\sigma \tilde{d})^{'}\Sigma_\beta^{-1}(\bar\sigma \tilde{d})$, and $\tilde{d} = \left(\sum_{t=1}^T E[I_t]\rho_t(1-\rho_t)Z_tZ_t'\right)^{-1}\sum_{t=1}^T E[I_t]\rho_t(1-\rho_t)d(t)Z_t$ \citep{Tu2004}.  Note that $\tilde{d}=\tilde\beta/\bar\sigma$.

	\textbf{Working Assumptions}
	To derive the sample size formula, we use the form of the non-centrality parameter of the limiting non-central chi-squared distribution, along with working assumptions.  The working assumptions are used to simplify the form of $\Sigma_{\beta}^{-1}$.   In particular, we make the following working assumptions:
	\begin{enumerate}[(a)]
		
		\item $E(Y_{t+1}|I_t=1) = B_t'\alpha$, for some $\alpha \in \mathbb{R}^{q}$
		\label{assum:a}
		
		\item $\beta(t) = Z_t'\beta$ for some $\beta \in \mathbb{R}^{p}$
		\label{assum:b}		
		\item $\Var(Y_{t+1}|I_t=1,A_t)$  is constant in $t$ and $A_t$
		\label{assum:c}		
		
		\item $E[\tilde\epsilon_t\tilde\epsilon_s | I_t=1, I_s = 1, A_t, A_s]$ is constant in $A_t$, $A_s$.
		\label{assum:d}		
	\end{enumerate}
	where, as before, $\tilde{\epsilon}_t=Y_{t+1}-I_tB_t'\tilde\alpha -(A_t -\rho_t)I_tZ_t'\tilde\beta$. See appendix \ref{sec:A1} (Lemma \ref{simpli}) for proof of variance formulas under these working assumptions. The above working assumptions are somewhat simplistic but as will be seen below the resulting sample size formula is robust to  moderate violations. First,
	under these working assumptions  the alternative hypothesis can be re-written as
	\begin{align}
	\text{H}_1:\beta/\bar\sigma = d,
	\label{alternative2}
	\end{align}
	where $d$ is a $p$ dimensional vector of standardized effects.
	Furthermore, $\Sigma_\beta$ is given by
	\begin{equation*}
	\Sigma_\beta  = \bar\sigma^2\bigg( \sum_{t=1}^{T}E[I_t]\rho_t(1-\rho_t)Z_tZ_t'\bigg)^{-1},
	\end{equation*}
	and thus $c_N$ is given by
	\begin{equation}
c_N = N{d}'\bigg( \sum_{t=1}^{T}E[I_t]\rho_t(1-\rho_t)Z_tZ_t'\bigg) {d}.
	\label{cN}
	\end{equation}
	To improve the small sample approximation, we use the multiple of the  $F$-distribution as discussed above.
	Thus the sample size, $N$, is found by solving
	\begin{align}
	F_{p, N-q-p; c_N}\left(F_{p, N-q-p}^{-1}\left(1-\alpha_0\right)\right)=1-\beta_0
	\label{samplesizeeq}
	\end{align}
	where $F_{p,N-q-p;c_N}$ is  the noncentral $F$ distribution with noncentrality parameter, $c_N$  and $1-\beta_0$ is the desired power.   The inputs to this sample size formula are $\{Z_t\}_{t=1}^T$, a scientifically meaningful  value for $d$  (see  below for an illustration), the time-varying availability pattern, $\{E[I_t]\}_{t=1}^T$, the desired significance level, $\alpha_0$ and power, $1-\beta_0$.

	Now we describe how the information needed in the sample size formula might be obtained when the alternative is quadratic ($p=3$, (\ref{beta})).  In this case we first elicit the initial standardized proximal main effect given by $Z_1'\beta/ \bar\sigma=\beta_1/\bar\sigma$.
	Second we elicit the averaged across time, standardized proximal main effect $\bar d= \frac{1}{T}\sum_{t=1}^{T}Z_t'\beta/\bar\sigma$.  Lastly we elicit the time at which the proximal main effect is maximal, i.e. $\arg\max_t Z_t'\beta$.   These three quantities can then be used to solve for {$d=( d_1, d_2, d_3)'$}. For example, in HeartSteps,  we might want to determine the sample size to ensure 0.80 power when there is no initial treatment effect on the first day, and the maximum proximal main effect comes around day $29$. We specify the expected availability, $E[I_t]$ to be constant in $t$ and $Z_t$ is given by (\ref{beta}). Table \ref{formulaforheartsteps} gives sample sizes for HeartSteps under a variety of average standardized proximal main effects ($\bar d$).

\begin{table}[H]
	\centering
	\begin{threeparttable}

	\caption{Illustrative sample sizes for HeartSteps. The day of maximal treatment effect is 29. The expected availability is constant in $t$. }
	
	\begin{tabular}{|l|rrrr|}

		\diagbox{$\bar{d}$}{$E[I_t]$}  & 0.7 & 0.6 & 0.5 & 0.4 \\
		\hline
		0.10 &  32 &  36 &  42 &  52  \\
		0.09 &  38 &  44 &  51 &  63  \\
		0.08 &  47 &  54 &  64 &  78  \\
		0.07 &  60 &  69 &  81 & 101 \\
		0.06 &  79 &  92 & 109 & 135 \\
		0.05 & 112 & 130 & 155 & 193 \\
		\hline
	\end{tabular}
	\begin{tablenotes}

 \item   $\bar d=(1/T)\sum_{t=1}^{T}Z_t'd$ is the average standardized treatment effect. 
	\end{tablenotes}
	\label{formulaforheartsteps}
	\end{threeparttable}
\end{table}

	In the behavioral sciences a standardized effect size of $0.2$ is considered small \citep{Cohen1988}.   Thus given the very small standardized effect sizes, the sample sizes given in Table~\ref{formulaforheartsteps} seem unbelievably small. Two points are worth making in this regard.   First the use of the alternative parametric hypothesis (\ref{alternative2})  in forming the test statistic, implies that both between-subject as well as within-subject contrasts in proximal responses are used to detect the alternative.  To see this, note that if we focused on only the first time point, $t=1$, and tested $H_0:\beta(1)=0$, then an appropriate test would be a two-sample $t$-test based on the  proximal response $Y_2$, in which case the required sample size would be much larger (akin to the sample size for a two arm randomized-controlled trial in which 40\% of the subjects are randomized to the treatment arm).   This two-sample $t$-test uses only between-subject contrasts in proximal response to test the  hypothesis.   The required sample size would be even larger for a  test of $H_0:\beta(1)=0,\ \beta(2)=0$ in which no relationship between $\beta(1)$ and $\beta(2)$ is assumed.  Conversely the sample size would be smaller if one focused on detecting alternatives to $H_0:\beta(1)=0,\ \beta(2)=0$ of the form $H_1: \beta(1)=\beta(2) \neq 0$.   The use of the alternative, $\beta(1)=\beta(2) \neq 0$, allows one to construct tests that use both between-subject as well as within-subject contrasts in proximal responses.   Our approach is in between these two extremes in that we focus on detecting smooth, in $t$, alternatives to $H_0:\beta(t)=0$ for all $t$.  This permits use of both within- as well as between-subject contrasts in proximal responses.  The assumption of a parsimonious alternative enables the use of smaller sample sizes.   A second point is that, at this time, there is no general understanding of how large the standardized effect size should be for these "in-the-moment" effects of a treatment.  Thus these standardized effects may or may not be considered small in future.

\section{Simulations}
	We consider a variety of simulations with different generative models to evaluate the performance of the sample size formulae. In the simulations presented here, we use the same setup as in HeartSteps; see Appendix~\ref{sec:A2} for simulations in other setups (Table \ref{worktrue2856}).   Specifically, the duration of the study is 42 days and there are 5 decision times within each day ($T = 210$). The randomization probability is 0.4, i.e. $\rho  = \rho_t = P(A_t = 1) = 0.4$.  The sample size formula is given in (\ref{cN}) and (\ref{samplesizeeq}).  All simulations are based on 1,000 simulated data sets.

	Throughout this section the inputs to this sample size formula are $Z_t=\left(1, \lfloor \frac{t-1}{5}\rfloor,\lfloor \frac{t-1}{5}\rfloor^2 \right)'$, the time-varying availability pattern, $\tau_t=E[I_t]$, $d$, $\alpha_0=.05$ and power, $1-\beta_0=.80$.  The value for the vector $d$ is indirectly specified via (a) the time at which the maximal standardized proximal main effect is achieved ($\arg\max_t Z_t'd$), (b) the averaged across time, standardized proximal main effect $\bar d=\frac{1}{T}\sum_{t=1}^{T}Z_t'd$ and (c) no  initial standardized proximal main effect ($Z_1'd=d_1=0$). The test statistic used to evaluate the sample size formula  is given by (\ref{teststat}) in which $B_t$ and $Z_t$ are set  to $\left(1, \lfloor \frac{t-1}{5}\rfloor,\lfloor \frac{t-1}{5}\rfloor^2 \right)'$.

	The simulation results provided below illustrate that the sample size formula and associated test statistic are robust. For convenience we summarize the results here.  When the working assumptions hold, then under a variety of availability patterns, i.e., time-varying values for $\tau_t=E[I_t]$ (see Figure~\ref{ShapeOfTau}) the desired Type I error and power are preserved.   This is also the case when past treatment impacts availability.  Furthermore the sample size formula is robust to deviations from the working assumptions, that is, provides  the desired Type I error and power; this is true for a variety of forms of the true proximal main effect of the treatment (see Figure~\ref{ShapeOfBeta}), a variety of distributions and correlation patterns for the errors,  and  dependence of $Y_{t+1}$ on past treatment.  In all cases the above robustness occurs as long as we  provide an approximately true or conservative value for the standardized effect, $d$ and if we provide an approximately true or conservative (low) value for the availability, $E[I_t]$.

	In our simulations, we note several areas in which the sample size formula is less robust to the working assumption (c); this is when the error variance in $Y_{t+1}$ varies depending on whether treatment $A_t=1$ or $A_t=0$ or with time $t$.  In particular if the ratio of $\Var[Y_{t+1}|I_t = 1, A_t=1]/\Var[Y_{t+1}|I_t = 1, A_t=0]<1$, then the power is reduced.  Also if average variance, $ E\big[\Var[Y_{t+1}|I_t = 1, A_t] \big]$ varies greatly with time $t$, then the power is reduced.  See below for details. Lastly as would be expected for any sample size formula, using values of the standardized effect size, $d$, or availability   that are larger than the truth degrades the power of the procedure.

\subsection{Working Assumptions Underlying Sample Size Formula are True} 
	First, we considered a variety of settings in which the working assumptions (a)-(d) hold and in which the inputs to the sample size formula are correct ($d$ is correct under the alternate hypothesis and the  time-varying availability $E[I_t]$ is correct). Neither the working assumptions nor the inputs to the sample size formula specify the error distribution, thus in the simulation we consider 5  distributions for the errors in the model for $Y_{t+1}$ including independent normal, Student's $t$ and  exponential distributions as well as two autoregressive (AR) processes; all of these error patterns satisfy $\bar\sigma^2=1$ (recall $\bar{\sigma}^2=(1/T)\sum_{t=1}^T E\left[\Var\left(Y_{t+1}\big| I_t=1,A_t\right)\right]$).  Furthermore neither the working assumptions nor the inputs to the sample size formula specify the dependence of the availability indicator, $I_t$ on past treatment.  Thus we consider  settings in which  the availability decreases as the number of recent treatments increases.  For brevity, we provide these standard results in the Appendix~\ref{sec:A2} (Tables \ref{worktrue42T} and \ref{worktrue42P}).  The results are generally quite good, with very few Type I error rates significantly above .05 and power levels significantly below .80.

 \begin{figure}[H]
  	\centering
  	\includegraphics[width = \linewidth]{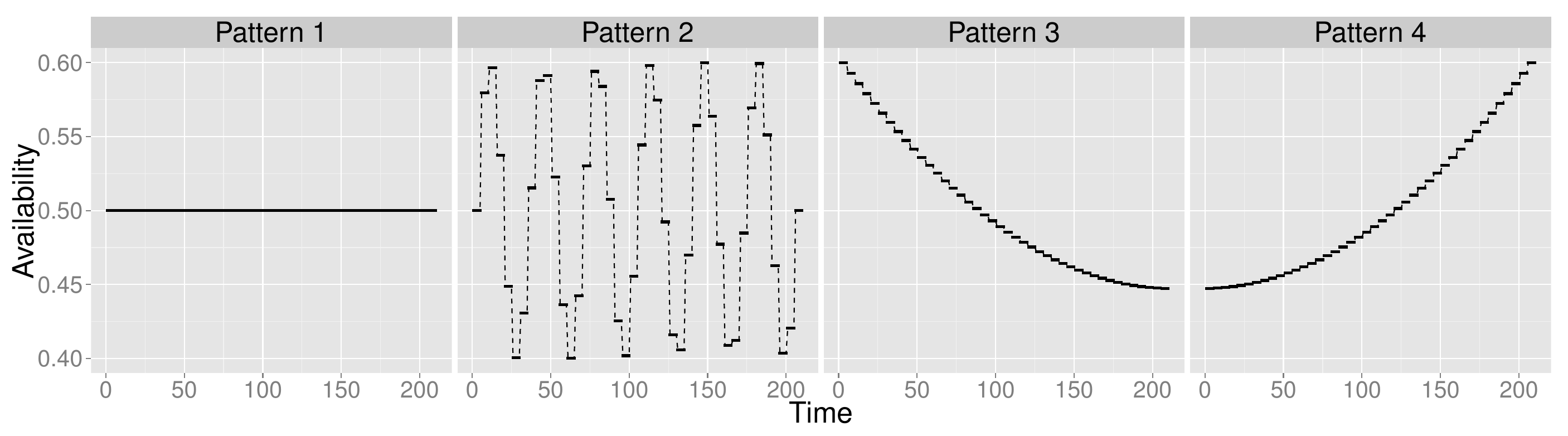}
  	\caption{Availability Patterns.  The x-axis is decision time point and y-axis is the expected availability.  Pattern 2 represents availability varying by day of the week with higher availability on the weekends and lower mid-week. The average availability is 0.5 in all cases. }   	
  	\label{ShapeOfTau}
  \end{figure}
  
\subsection{Working Assumptions Underlying Sample Size Formula are False}
	Second, we considered a variety of settings in which the working assumptions are false but the inputs to the sample size formula are approximately correct as follows.   Throughout  $\bar\sigma^2=1$.

\subsubsection{Working Assumption (a) is Violated.}

	Suppose that the true $E[Y_{t+1}|I_t = 1] \neq B_t\alpha$ for any $\alpha \in \mathbb{R}^q$. In particular,  we consider the scenario in which there is a "weekend" effect on $Y_{t+1}$; see other scenario in Appendix \ref{sec:A2}. The data is generated as follows,
	\begin{align*}
	& I_t \stackrel{Ber}{\sim} \big(\tau_t\big), \ \ A_t \stackrel{Ber}{\sim} \big(\rho \big)\\
	& Y_{t+1} = \alpha(t) +  (A_t - \rho)Z_t'd + \epsilon_t, \text{ if $I_t = 1$}
	\end{align*}
	where the conditional mean $\alpha(t) = B_t'\alpha + W_t \theta$. $W_t$ is a binary variable: $W_t = 1$ if day of the week is time $t$ is a weekend day, and $W_t = 0$ if the day is a weekday. For simplicity, we assume each subject starts on Monday, e.g., for $k = 1, \dots, 6$, $W_{i + 35(k-1)} = 0$, when $i = 1, \dots, 25$, $W_{i + 35(k-1)} = 1$, when $i = 26, \dots, 35$ (recall that we assume in the simulation that there are 5 decision time points per day and the length of the study is 6 week). The values of $\{\alpha_i, i = 1,2, 3\}$ are determined by setting $\alpha(1) = 2.5, \arg \max_t \alpha(t) = T, (1/T)\sum_{t=1}^{T}\alpha(t) - \alpha(1)  = 0.1 $. The error terms $\{\epsilon_t\}_{t=1}^N$ are i.i.d. N$(0, 1)$. The day of maximal proximal effect is  29.  Additionally, different values of the averaged standardized treatment effect and four patterns of availability as shown in Figure \ref{ShapeOfTau} with average 0.5 and are considered.  The type I error rate is not affected, thus is omitted here. The  simulated power is reported in Table \ref{workfalsecase62P}; for more details see Table \ref{workfalsea1} in Appendix \ref{sec:A2}.

\begin{table}[H]

	\centering	
	\begin{threeparttable}
	\caption{Simulated power when working assumption (a) is violated. The patterns of availability are provided in Figure \ref{ShapeOfTau}. }

	\begin{tabular}{|c|c|ccc|}

		&&\multicolumn{3}{c|}{Availability Pattern}\\
		\cline{3-5}	
				{$\theta$}&{$\bar d$} & Pattern 1 & Pattern 2 & Pattern 3 \\
		\hline
		\multirow{2}{*}{$0.5 \bar d$} & 0.10 & 0.80 & 0.79 & 0.81 \\
		& 0.06 & 0.78 & 0.83 & 0.81 \\
		\hline
		\multirow{2}{*}{$ 1\bar d$}& 0.10 & 0.79 & 0.78 & 0.78 \\
		& 0.06 & 0.78 & 0.79 & 0.79 \\
		\hline
		\multirow{2}{*}{$1.5 \bar d$} & 0.10 & 0.78 & 0.81 & 0.78 \\
		& 0.06 & \textbf{0.77} & 0.81 & 0.82 \\
		\hline
		\multirow{2}{*}{$2 \bar d$} & 0.10 & 0.78 & 0.79 & 0.79 \\
		& 0.06 & 0.81 & 0.79 & 0.78 \\
		\hline
	\end{tabular}
	\begin{tablenotes}
		\item   $\theta$ is the coefficient of $W_t$ in $E[Y_{t+1}|I_t =1]$. $\bar d=(1/T)\sum_{t=1}^{T}Z_t'd$ is the average standardized treatment effect.  Bold numbers are significantly (at .05 level) greater lower than 0.80. 
	\end{tablenotes}
	\label{workfalsecase62P}
	\end{threeparttable}
\end{table}

\subsubsection{Working Assumption (b) is Violated.}

	Suppose that the true $\beta(t)\neq Z_t'\beta$ for any $\beta$. Instead the vector of standardized effect, $d$, used in the sample size formula corresponds to the projection of $d(t)$, that is,
	\noindent $d=  \left(\sum_{t=1}^TE[I_t]Z_tZ_t'\right)^{-1}\sum_{t=1}^TE[I_t]Z_td(t)$  (recall $d(t)=\beta(t)/\bar\sigma$ and $\rho_t = \rho$).   The sample size formula is used with the correct availability pattern, $\{E[I_t]\}_{t=1}^T$.
	The data for each simulated subject is generated sequentially as follows.
	For each time $t$,
	\begin{align*}
	& I_t \stackrel{Ber}{\sim} \big(\tau_t\big), \ \ A_t \stackrel{Ber}{\sim} \big(\rho \big)\\
	& Y_{t+1} = \alpha(t) + (A_t - \rho)d(t) + \epsilon_t, \text{ if $I_t = 1$}
	\end{align*}
	for the variety of $d(t)=\beta(t)/\bar\sigma$ and $E[I_t]$ patterns provided in Figure~\ref{ShapeOfBeta} and in Figure~\ref{ShapeOfTau} respectively. The average availability is 0.5. The error terms $\{\epsilon_t\}_{t=1}^T$ are generated as i.i.d. $N(0,1)$.  The conditional mean, $E[Y_{t+1}|I_t=1]=\alpha(t)$ is given by $\alpha(t) = \alpha_1 + \alpha_2 \lfloor \frac{t-1}{5} \rfloor + \alpha_3 \lfloor \frac{t-1}{5} \rfloor ^2$, where $\alpha_1 = 2.5$, $\alpha_2 = 0.727$,$\alpha_3 = - 8.66\times 10^{-4}$ (so that $(1/T)\sum_t\alpha(t) - \alpha(1)= 1$, $\operatorname{arg\,max}_t\alpha(t) = T$).

\begin{figure}[H]

	\centering
	\includegraphics[width = 1\linewidth]{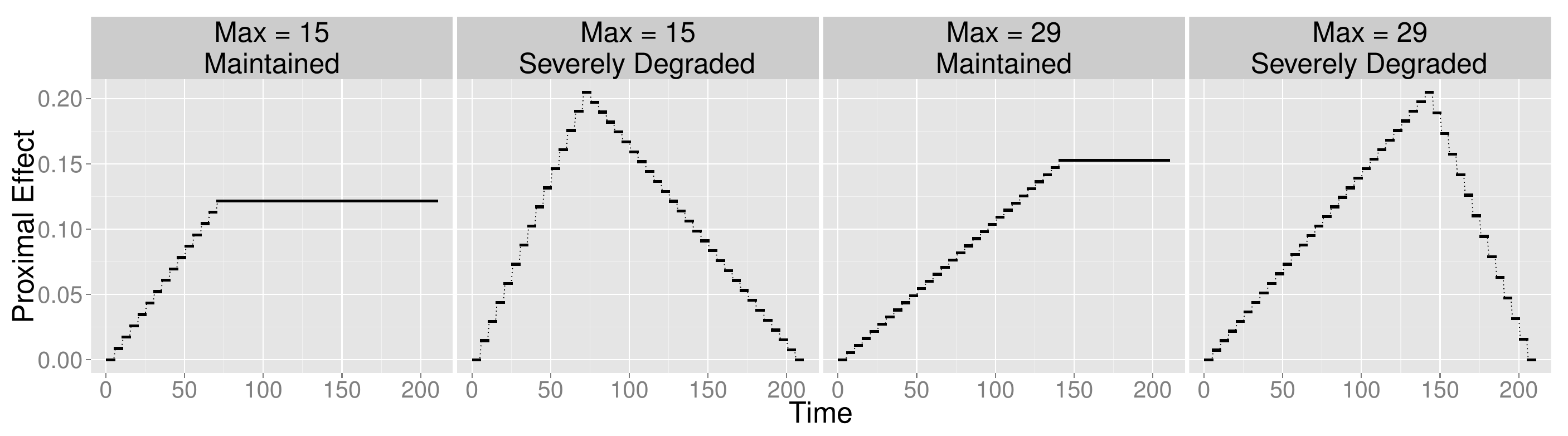} 
		\caption{Standardized Proximal Main Effects of Treatment, $\{d(t)\}_{t=1}^T$: representing maintained and severely degraded time-varying proximal treatment effects. The horizontal axis is the decision time point. The vertical axis is the standardized treatment effect. The "Max" in the titles refer to the day of maximal proximal effect. The average standardized proximal effect is $\bar d= 0.1$ in all plots.
		}
	\label{ShapeOfBeta}
\end{figure}

	The simulated powers are provided in Table \ref{workfalsecase1}. In all cases the power is close to $.80$; this is because all of the proximal main effect patterns in Figure~\ref{ShapeOfBeta} are sufficiently well approximated by a quadratic in time. See Appendix~\ref{sec:A2} for other cases of $d(t)$ and details (Figure \ref{ShapeOfBetafull} and Table \ref{workfalsecase1full}).

\begin{table}[H]
	\centering
	\begin{threeparttable}
	\caption{ Simulated power when working assumption (b) is violated. The shape of the standardized proximal effect and pattern for availability are provided in Figure~\ref{ShapeOfBeta} and \ref{ShapeOfTau} respectively. The sample sizes are given on the right.  }	
	
	\begin{tabular}{|c|c|c|cc|rr|}
		& & & \multicolumn{2}{c|}{Shape of $d(t)$} & & \\
		\cline{4-5}		
		$\bar d$& Availability Pattern & Max  & Maintained & Degraded &	\multicolumn{2}{c|}{Sample Size}  \\
		
		\hline
		\multirow{6}{*}
		{0.10} & \multirow{2}{*}{Pattern 1} & 15 & 0.78 & 0.79 & 43& 39\\
		&  & 29 & 0.80 & 0.79 & 38& 38\\
		\cline{2-7}
		&\multirow{2}{*}{Pattern 2} & 15 & 0.79 & 0.80 & 43 & 39\\
		&  & 29 & 0.78 & 0.79 &  38 &  38 \\
		\cline{2-7}
		&\multirow{2}{*}{Pattern 3} & 15 & 0.81 & \textbf{0.77} &  45 &  41 \\
		&    & 29 & 0.81 & 0.78 &  37 &  39 \\
		\hline
		\multirow{6}{*}
		{0.06 } &\multirow{2}{*}{Pattern 1} & 15 & 0.81 & 0.79 & 111 & 100 \\
		&   & 29 & 0.81 & 0.79  &  96 &  96 \\
		\cline{2-7}	
		& \multirow{2}{*}{Pattern 2}& 15 & 0.79 & 0.81 & 112 & 100 \\
		&  & 29 & 0.79 & 0.80 &  96 &  96 \\
		\cline{2-7}	
		& \multirow{2}{*}{Pattern 3}& 15 & 0.78 & 0.81 & 116 & 106 \\
		&   & 29 & 0.80 & 0.80 &  95 & 101 \\
		\hline
	\end{tabular}
	\begin{tablenotes}

		\item   $\bar d=(1/T)\sum_{t=1}^{T}Z_t'd$ is the average standardized treatment effect. The "Max" in the first row refers to the day of maximal proximal effect. Bold numbers are significantly (at .05 level) lower than .80. 
	\end{tablenotes}
	\label{workfalsecase1}
	\end{threeparttable}
\end{table}

\subsubsection{Working Assumption (c) is Violated.}
	Suppose that $\Var[Y_{t+1}|I_t = 1, A_t]=A_t\sigma_{1t}^2 + (1-A_t)\sigma_{0t}^2$ where $\sigma_{1t}/\sigma_{0t} \neq 1$. The sample size formula is used with the correct pattern for $\{Z_t'd,\ E[I_t]\}_{t=1}^T$. The data for each simulated subject is generated sequentially as follows. For each time $t$,
	\begin{align*}
	& I_t \stackrel{Ber}{\sim} \big(\tau_t\big), \ \ A_t \stackrel{Ber}{\sim} \big(\rho \big)\\
	& Y_{t+1} = \alpha(t) + (A_t - \rho)Z_t'd + \mathds{1}_{\{A_t = 1\}}\sigma_{1t}\epsilon_t + \mathds{1}_{\{A_t = 0\}}\sigma_{0t}\epsilon_t, \text{ if $I_t = 1$}
\end{align*}
	where the average across time standardized proximal main effect, $\bar d=\frac{1}{T}\sum_{t=1}^{T}Z_t'd$ is $0.1$ and day of maximal effect is equal to 22 or 29. The function $\alpha(t)=E[Y_{t+1}|I_t=1]$ is as in the prior simulation. The availability, $\tau_t=0.5$. The error terms $\{\epsilon_t\}$ follow a normal AR(1) process, e.g., $\epsilon_t = \phi\epsilon_{t-1} + v_t$ with the variance of $v_t$ scaled so that $\Var[\epsilon_t] = 1$. Define $ \bar \sigma_t^2= E\big[\Var[Y_{t+1}|I_t = 1, A_t] \big]\left(=\rho\sigma_{1t}^2 + (1-\rho)\sigma_{0t}^2\right) $.  Recall the average variance $\bar\sigma^2$ is given by $(1/T)\sum_{t=1}^T\bar\sigma_t^2$. We consider 3 time-varying trends for $\{\bar \sigma_t\}$ together with different values of $\sigma_{1t}/\sigma_{0t}$; see Figure (\ref{TrendofSigma}). In each trend, $\bar \sigma_t^2$ is scaled such that $\bar \sigma = 1$; thus the standardized proximal main effect  in the generative model is $Z_t'd$. In all cases, the simulated type I error rates are close to $.05$ and thus the table is omitted here (see Appendix~\ref{sec:A2}, Table~\ref{workfalsecase2T}).  The simulated  power is given in Table \ref{workfalsecase2P}.

 \begin{figure}[H]

	\centering
	\includegraphics[width = 0.8\linewidth]{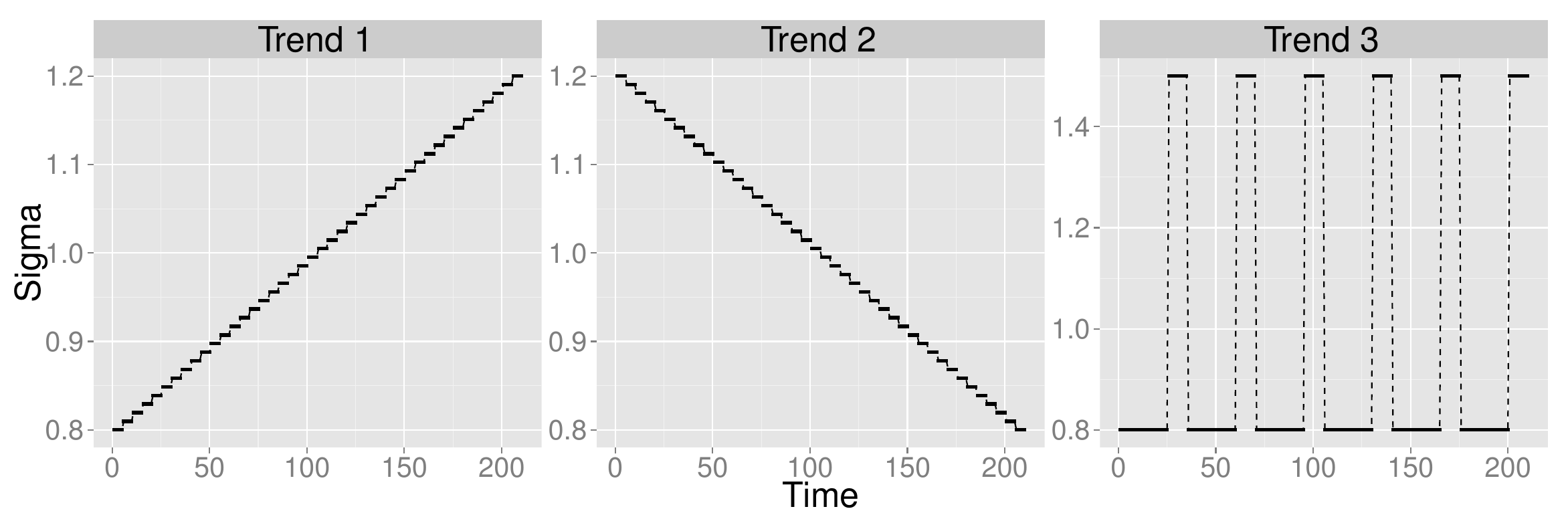}
	\caption{Trend of $\bar \sigma_t$: For all trends, $\bar \sigma_t^2$ is scaled so that $(1/T)\sum_{t=1}^{T}\bar \sigma_t^2 = 1$.  In Trend 3, the  variance, $\bar \sigma_t^2= E\big[Var[Y_{t+1}|I_t = 1, A_t] \big]$ peaks on weekends. In particular, $\bar \sigma_{7k+i} = 0.8$ for $i = 1,\dots,5$ and $\bar \sigma_{7k+i} = 1.5$ for $i = 6, 7$.}		
	\label{TrendofSigma}
\end{figure}

	In the case of $\sigma_{1t} < \sigma_{0t}$, the simulated powers are slightly {larger} than 0.8, while the simulated powers are {smaller} than 0.8 in the case of $\sigma_{1t} > \sigma_{0t}$. The impact of $\bar \sigma_t$ on the power depends on the shape of treatment effect: when $\beta(t)$ attains its maximum, more than halfway through the study, at day 29, a increasing $\{\bar\sigma_t\}$, trend 1,  lowers the power, while a decreasing $\{\bar\sigma_t\}$, trend 2, improves the power. When $\beta(t)$ attains a maximal effect midway through the study, either decreasing or increasing $\{\bar \sigma_t\}$ does not impact power. A large variation in $\bar\sigma_t$, e.g., trend 3, reduces the power in all cases. The differing auto correlations of the errors,  $\epsilon_t$, do not affect power; see a more detailed table in Appendix \ref{sec:A2}, Table~\ref{workfalsecase2T}.

\begin{table}[H]
	\centering
	\begin{threeparttable}
	\caption{Simulated power when working assumption (c) is violated, $\sigma_{1t}\neq \sigma_{0t}$. The trends are provided in Figure~\ref{TrendofSigma}. The availability is 0.5. The average proximal main effect, $\bar d= 0.1$ and the day of maximal effect is 22 or 29, and thus the associated sample sizes are 41 and 42. }	
	\begin{tabular}{|c|c|ccc|ccc|}
		& &
		\multicolumn{3}{c|}{Max = 22 (N = 41)} &
		\multicolumn{3}{c|}{Max = 29 (N = 42)}  \\
		$\phi$ & $\frac{\sigma_{1t}}{\sigma_{0t}}$& trend 1 & trend 2 & trend 3 & trend 1 & trend 2 & trend 3 \\
		\hline
		& 0.8 & 0.83 & 0.84 & 0.80 & 0.81 & 0.89 & 0.79 \\
		-0.6 & 1.0 & 0.79 & 0.80 & \textbf{0.75} & \textbf{0.74} & 0.85 & \textbf{0.70} \\
		& 1.2 & \textbf{0.76} &\textbf{ 0.76} & \textbf{0.71} & \textbf{0.72} & 0.81 & \textbf{0.70} \\
		\hline
		& 0.8 & 0.85 & 0.82 & 0.79 & 0.81 & 0.88 & 0.78 \\
		0 & 1.0 & 0.79 & 0.81 & \textbf{0.74} &\textbf{ 0.77} & 0.86 & \textbf{0.72} \\
		& 1.2 & \textbf{0.77} & \textbf{0.77 }& \textbf{0.71} & \textbf{0.70} & 0.83 & \textbf{0.70} \\
		\hline
		& 0.8 & 0.83 & 0.83 & 0.81 & \textbf{0.77} & 0.87 & \textbf{0.77} \\
		0.6 & 1.0 & \textbf{0.76} & 0.79 & \textbf{0.75} & \textbf{0.73} & 0.85 & \textbf{0.77} \\
		& 1.2 & 0.78 & \textbf{0.77 }& \textbf{0.73} & \textbf{0.72} & 0.82 & \textbf{0.69} \\
		\hline
	\end{tabular}
	\label{workfalsecase2P}

	\begin{tablenotes}
		\item   $\phi$ is the parameter in AR(1) for $\{\epsilon_t\}_{t=1}^T$. \lq\lq Max\rq\rq is the day in which the maximal proximal effect is attained. Bold numbers are significantly (at .05 level) lower than .80.
	\end{tablenotes}
	\end{threeparttable}	
\end{table}

\begin{table}[H]
	\centering
	\begin{threeparttable}
		
	\caption{Simulated power when working assumption (d) is false. The expected availability is 0.5, the average proximal main effect $\bar d=0.1$ and the maximal effect is attained at day 29. The associated sample size is 42. }	

	\begin{tabular}{|c|c|ccc|}
		Parameters in $I_t$ &\diagbox{$\gamma_1$}{$\gamma_2$}&  -0.1 &  -0.2 &  -0.3 \\
		\hline
		&-0.2 & 0.80 & 0.81 & 0.79 \\
		$\eta_1 = -0.1, \eta_2 = -0.1$&-0.5 & 0.79 & 0.81 & 0.80 \\
		&-0.8 & 0.81 & 0.82 & 0.79 \\
		\hline
		&-0.2 & 0.78 & 0.82 & 0.79 \\
		$\eta_1 = -0.2, \eta_2 = -0.1$&-0.5  & 0.81 & \textbf{0.77} & \textbf{0.77} \\
		&-0.8 & 0.81 & 0.79 & 0.78 \\
		\hline
		&-0.2 & 0.78 & 0.78 & 0.80 \\
		$\eta_1 = -0.1, \eta_2 = -0.2$&-0.5 & 0.80 & 0.79 & 0.78 \\
		&-0.8 & 0.78 & 0.79 & 0.80 \\
		\hline
	\end{tabular}
	\begin{tablenotes}
		\item   $\gamma_1$ and $\gamma_2$ are parameters for the cumulative treatments in the model of $Y_{t+1}$. $\eta_1$ and $\eta_2$ are parameters in the model of $I_t$. Bold numbers are significantly (at .05 level) less than .80.
	\end{tablenotes}
	\label{workfalsecase3P}
	\end{threeparttable}	
\end{table}

\subsubsection{ Working Assumption (d) is Violated}.  
	We violate assumption (d) by  making both the availability indicator, $I_t$ and proximal response, $Y_{t+1}$ depend on past treatment and past proximal responses. The sample size formula is used with the correct value of $\{Z_t'd, E[I_t]\}_{t=1}^T$; in particular $d$ is determined by an average proximal main  effect of $\bar d= 0.1$, day of maximal effect equal to 29 ($d_1 =0, d_2 = 9.64 \times 10^{-3} , d_3 = -1.72 \times  10^{-4}$) and with a constant  availability pattern equal to 0.5. The data for each simulated subject is generated as follows. Denote the cumulative treatment over last 24 hours by $C_t = \sum_{j = 1}^{5}A_{t-j}I_{t-j}$.  In each time $t$,
		\begin{align*}
	& I_t \stackrel{Ber}{\sim} \big(\tau_t +  \tau_t\eta_1 (C_t - E[C_t]) + \tau_t\eta_2 \operatorname{Trunc}(\frac{1}{5}\sum_{j = 1}^{5}\epsilon_{t-j})\big),  \ \ A_t \stackrel{Ber}{\sim} \big(\rho \big)\\
	& Y_{t+1} = \begin{cases}
	\alpha(t) + \gamma_1 \left[C_t-E[C_t | I_t = 1]\right] + (A_t - \rho)\big[Z_t'd + Z_t'd\gamma_{2}(C_t-E[C_t|I_t = 1])\big] + \sigma^* \epsilon_t \text{ if $I_t$ = 1}\\
	\alpha_0(t)  + \epsilon_t \text{ if $I_t $ = 0}.
	\end{cases}
	\end{align*}

	where $\{\epsilon_t\}_{t=1}^T$ are i.i.d $N(0,1)$ and $\operatorname{Trunc}(x) := x\mathds{1}_{|x| \leq 1} + \operatorname{sign}(x)\mathds{I}_{|x| >1}$ (the truncation is used to ensure that $\tau_t +  \tau_t\eta_1 (C_t - E[C_t]) + \tau_t\eta_2 \operatorname{Trunc}(\frac{1}{5}\sum_{j = 1}^{5}\epsilon_{t-j}) \in [0,1]$). Again  $\alpha(t)$ is as in the prior simulation. $\sigma^*$ is calculated such that the average variance is equal to 1, e.g., $\bar \sigma = \frac{1}{T}\sum_{t = 1}^{T} E[\Var[Y_{t+1}|I_t = 1, A_t]] = 1$. Note that since $C_t$  is centered in both the model for $I_t$ as well as in the model for $Y_{t+1}$, the standardized proximal main effect is $Z_t'd$ and $E[I_t]=\tau_t=0.5$.  $\alpha_0(t)$ is the conditional mean of $Y_{t+1}$ when $I_t = 0$. The form of $E[Y_{t+1}|I_t = 0]$ is not essential: only $Y_{s+1} - E[Y_{s+1} | I_s =0]$ is used to generate $I_t$. In the simulation, $E[C_t|I_t = 1]$ and $\sigma^*$ are calculated by Monte Carlo methods. As before, the simulated type I error are not affected; see Table \ref{workfalsecase3T} in appendix \ref{sec:A2}. The simulated powers are provided in Table \ref{workfalsecase3P}.

\subsection{Some Practical Guidelines}

	Third, it is critical to use conservative values of $d$ and availability $E[I_t]$ in the sample size formula. It is not surprising that the quality of the sample size formula depends on an accurate or conservative values of the standardized effects, $d$, as this is the case for all sample size formulas.   Additionally availability provides the number of decision points as which treatment might be provided per individual and thus the sample size formula should be sensitive to availability.   To illustrate these points we consider two simulations in which the  data is generated by
	\begin{align*}
	& I_t \stackrel{Ber}{\sim} \big(\tau_t\big), \ \ A_t \stackrel{Ber}{\sim} \big(\rho \big)\\
	& Y_{t+1} = \alpha(t) + (A_t - \rho)Z_t'd + \epsilon_t, \text{ if $I_t = 1$}
	\end{align*}
	where the $\epsilon_t$'s are i.i.d. standard normals and  $\alpha(t)$ is as in the prior simulations. In the first simulation, suppose the scientist provides the correct availability pattern, $\{E[I_t]\}_{t=1}^T$, the correct time at which the maximal standardized proximal main effect is achieved ($\arg\max_t Z_t'd$) and the correct  initial standardized proximal main effect ($Z_1'd=d_1=0$) but provides too low a value of the averaged across time, standardized proximal main effect $\bar d=\frac{1}{T}\sum_{t=1}^{T}Z_t'd$.  The simulated power is provided in Appendix~\ref{sec:A2}, Table~\ref{wrongguessave}.  The degradation in power is pronounced as might be expected.

	In the second simulation, suppose the scientist provides the correct $\arg\max_t Z_t'd$, correct $Z_1'd=d_1=0$, correct $\bar d=\frac{1}{T}\sum_{t=1}^{T}Z_t'd$ and although the scientist's time-varying pattern of availability is correct, the magnitude is underestimated. The simulation result is in Appendix~\ref{sec:A2}, Table~\ref{wrongguessavetau}.  Again the degradation in power is pronounced.
	
\section{Discussion}
	In this paper, we have introduced the use of micro-randomized trials in mobile health and have provided an approach to determining the sample size.  More sophisticated sample size procedures might be entertained.  Certainly it makes sense to include baseline information in the sample size procedure, for example in HeartSteps, a natural baseline variable is baseline step count. The inclusion of baseline variables in $B_t$ in the regression \eqref{lsfit} is straightforward.   An interesting generalization to the sample size procedure would allow scientists to  include time-varying variables (in $S_t$) as covariates in $B_t$ in the regression \eqref{lsfit}.  This might be a useful strategy for reducing the error variance.

An alternate  to the micro-randomized trial design is the single case design often used in the behavioral sciences \citep{DalleryRaiff}.  These trials usually only involve 1 to 13 participants \citep{ShadishSullivan} and the data analyses focus on the examination of  visual trends for each participant separately.   For example, during periods when a participant is on treatment the response might be generally higher than the height of the response during the time periods in which the participant is off treatment.  Dallery et al. \citep{Dallery} provide an excellent overview of single case designs and their use for evaluating technology based intervention. Their paper illustrates the visual analyses that would be conducted on each participant's data. A critical assumption is that the effect of the treatment is only temporary (no carry-over effect) so that each participant can act as his own control.  We believe that in settings in which treatments are expected to have sufficiently strong effects so as to overwhelm the within person variability in response (thus a visual analysis can be compelling), these designs provide an alternative to the micro-randomized trial design.

Although this paper has focused on determining the sample size to detect the proximal main effect of a treatment with a given power,  micro-randomized studies provide data for a variety of interesting further analyses.   For example, it is  of some interest to model and understand the predictors of the time-varying availability indicator.   In the case of HeartSteps we will know why the participant is unavailable (driving a car, already active or has turned off the lock-screen messages) so we will be able to consider each type of availability indicator.   Other very interesting further analyses include assessing interactions between treatments, $A_t$ and context, $S_t$, past treatment $A_s, s<t$ on the proximal response, $Y_{t+1}$.  Also there is much interest in using this type of data to construct \lq\lq dynamic treatment regimes\rq\rq; in this setting these are called Just-in-Time Adaptive Interventions \citep{MetzandNilsen2014}. The sequential micro-randomizations enhance all of these analyses by reducing causal confounding.

\ack
This research was supported by NIH grants P50 DA039838, R01 AA023187, R01HL12544001 and grant U54EB020404 awarded by the National Institute of Biomedical Imaging and Bioengineering (NIBIB) through funds provided by the trans-NIH Big Data to Knowledge (BD2K) initiative (www.bd2k.nih.gov).
AT acknowledges the support of NSF under CAREER grant IIS-1452099. 

\bibliographystyle{stat_in_med}
\bibliography{reference}

\newpage
\begin{appendices}
	
	\section{Theoretical Results and Proofs}
	\label{sec:A1}
	
	\begin{lemma}[Least Squares Estimator]
		\label{lse}
		The least square estimators $\hat \alpha, \hat \beta$ are consistent estimators of $\tilde \alpha, \tilde \beta$ in \eqref{tilde.alpha} and \eqref{tilde.beta}. In particular, if $\beta(t) = Z_t'\beta^*$ for some vector $\beta^*$, then $\tilde{\beta} = \beta^*$.  Under moment conditions, we have $
		\sqrt{N}(\hat \beta - \tilde \beta) \rightarrow N(0, \Sigma_\beta)$,
		where the asymptotic variance $\Sigma_\beta$ is given by $
		\Sigma_\beta = Q^{-1}WQ^{-1}
		$
		where $Q=\sum_{t=1}^TE[I_t]\rho_t(1-\rho_t)Z_tZ_t'$, $
		W=  E\bigg[\sum_{t=1}^T \tilde{\epsilon}_tI_t(A_t -\rho_t)Z_t\times\sum_{t=1}^T\tilde{\epsilon}_tI_t(A_t -\rho_t)Z_t'\bigg]$ and $\tilde \epsilon_t = Y_{t+1} - B_t'\tilde{\alpha} - Z_t'\tilde{\beta}(A_t - \rho_t)$.
	\end{lemma}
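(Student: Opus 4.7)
The plan is to treat $(\hat\alpha,\hat\beta)$ as a linear Z-estimator coming from the first-order conditions of the quadratic criterion (\ref{ls}), identify its population limit by exploiting the randomization of $A_t$, and then apply the standard M-estimator sandwich argument with a key simplification caused by randomization.

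First, differentiating (\ref{ls}) gives the stacked normal equations
\[
\mathbb{P}_N\!\sum_{t=1}^T I_t\bigl(Y_{t+1}-B_t'\alpha-(A_t-\rho_t)Z_t'\beta\bigr)B_t=0,\quad
\mathbb{P}_N\!\sum_{t=1}^T I_t\bigl(Y_{t+1}-B_t'\alpha-(A_t-\rho_t)Z_t'\beta\bigr)(A_t-\rho_t)Z_t=0,
\]
which is linear in $(\alpha,\beta)$ with a unique solution once the stated invertibility holds. Taking expectations and using the randomization identity $E[(A_t-\rho_t)\,g(I_t,B_t,Z_t,\text{past})]=0$, every cross term involving $(A_t-\rho_t)$ and the $\alpha$-block drops out of the $\beta$-equation, while $E[(A_t-\rho_t)^2\mid I_t,Z_t]=\rho_t(1-\rho_t)$ is the only surviving piece of the quadratic coefficient. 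Splitting on $A_t\in\{0,1\}$ and invoking the definition $\beta(t)=E[Y_{t+1}\mid I_t=1,A_t=1]-E[Y_{t+1}\mid I_t=1,A_t=0]$ yields $E[I_t Y_{t+1}(A_t-\rho_t)Z_t]=E[I_t]\rho_t(1-\rho_t)\beta(t)Z_t$. The resulting population equations solve to (\ref{tilde.alpha})–(\ref{tilde.beta}); since the $\alpha$-block has already dropped out of the $\beta$-equation, plugging $\beta(t)=Z_t'\beta^\ast$ into the closed form immediately gives $\tilde\beta=\beta^\ast$ regardless of whether assumption (a) holds. Consistency then follows from the law of large numbers applied to each moment matrix, together with the uniqueness of the minimizer of the (convex quadratic) population criterion.

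For the CLT, Taylor expand the score at $\tilde\theta=(\tilde\alpha',\tilde\beta')'$:
\[
\sqrt{N}(\hat\theta-\tilde\theta)=M^{-1}\sqrt{N}\,\mathbb{P}_N\psi(\cdot;\tilde\theta)+o_P(1),
\qquad M=-E[\partial\psi/\partial\theta'].
\]
The key observation is that randomization forces the off-diagonal blocks of $M$ to vanish: $E[I_t(A_t-\rho_t)B_tZ_t']=0$, so
\[
M=\begin{pmatrix}\sum_t E[I_t]B_tB_t' & 0\\ 0 & Q\end{pmatrix}.
\]
Because $M$ is block diagonal, $\sqrt{N}(\hat\beta-\tilde\beta)=Q^{-1}\sqrt{N}\,\mathbb{P}_N\psi_\beta(\cdot;\tilde\theta)+o_P(1)$, and the multivariate CLT applied to the i.i.d.\ scores $\psi_\beta(X_i;\tilde\theta)=\sum_t \tilde\epsilon_{t,i}I_{t,i}(A_{t,i}-\rho_t)Z_t$ delivers asymptotic normality with variance $Q^{-1}WQ^{-1}=\Sigma_\beta$, where $W=E[\psi_\beta\psi_\beta']$ matches the stated expression verbatim.

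The main obstacle is cleanly handling the randomization identity when $B_t$ and $Z_t$ depend on baseline or non-treatment covariates rather than being purely deterministic: one must condition on the $\sigma$-algebra generated by everything measurable prior to $A_t$ (which includes $I_t$, $B_t$, $Z_t$, and the past), use $E[A_t-\rho_t\mid \cdot]=0$ by design, and then iterate expectations. Once that conditional argument is set up, the cross-term cancellations propagate through both the identification step and the block-diagonalization of $M$, and the rest is bookkeeping with standard Z-estimator theory under finite second-moment conditions on $Y_{t+1}$, $B_t$, $Z_t$.
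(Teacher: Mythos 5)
Your proposal is correct and follows essentially the same route as the paper: both exploit the closed-form/normal-equation structure of the linear least squares problem, use the randomization identity $E[I_t(A_t-\rho_t)]=0$ to block-diagonalize the moment matrix, compute $E[I_tY_{t+1}(A_t-\rho_t)Z_t]=E[I_t]\rho_t(1-\rho_t)\beta(t)Z_t$ to identify $\tilde\alpha,\tilde\beta$, and then verify the score has mean zero at $(\tilde\alpha,\tilde\beta)$ before applying the CLT to obtain the sandwich variance $Q^{-1}WQ^{-1}$. The only cosmetic difference is that you phrase the expansion in generic Z-estimator language while the paper manipulates the explicit OLS formula, and your closing remark about conditioning on the pre-$A_t$ $\sigma$-algebra is a sensible refinement that the paper sidesteps by treating $B_t,Z_t$ as deterministic functions of $t$.
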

	\begin{proof}
		It's easy to see that the least square estimators satisfy
		\begin{align*}
			\hat \theta = (\hat \alpha, \hat \beta) &= \bigg(\mathbb{P}_N\sum_{t=1}^T I_tX_tX_t' \bigg)^{-1} \bigg(\mathbb{P}_N \sum_{t=1}^T I_tY_{t+1}X_t\bigg)\label{thetahat}\\
			& \rightarrow \bigg(\sum_{t=1}^T E(I_tX_tX_t') \bigg)^{-1} \bigg(\sum_{t=1}^T E(I_tY_{t+1}X_t)\bigg)
		\end{align*}
		where $X_t' = (B_t', (A_t - \rho_t)Z_t')\in \mathbb{R}^{1\times (p+q)}$ is the covariate at time t. For each t,
		\begin{align*}
			& E(I_tX_tX_t') =
			\begin{pmatrix}
				E[I_t] B_tB_t'   & B_tZ_t'E[I_t(A_t - \rho_t)]\\
				Z_tB_t'E[I_t(A_t - \rho_t)] & Z_tZ_t'E[I_t(A_t - \rho_t)^2]
			\end{pmatrix} =
			\begin{pmatrix}
				E[I_t] B_tB_t' & 0\\
				0 & E[I_t] \rho_t(1-\rho_t)Z_tZ_t'
			\end{pmatrix}\\
			& E(I_tY_{t+1}X_t) =
			\begin{pmatrix}
				E[I_tY_{t+1}]B_t \\
				E[I_tY_{t+1}(A_t - \rho_t)]Z_t
			\end{pmatrix} =
			\begin{pmatrix}
				E[I_tY_{t+1}]B_t \\
				\rho_t(1-\rho_t)E[I_t]\beta(t)Z_t
			\end{pmatrix},
		\end{align*}
		so that
		\[
		\hat\alpha \rightarrow \left(\sum_{t=1}^T E[I_t]B_tB_t'\right)^{-1}\sum_{t=1}^TE[I_tY_{t+1}]B_t = \left(\sum_{t=1}^T E[I_t]B_tB_t'\right)^{-1}\sum_{t=1}^TE[I_t]\alpha(t)B_t
		\]
		\[
		\hat\beta \rightarrow  \left(\sum_{t=1}^T \rho_t(1-\rho_t) E[I_t]Z_tZ_t'\right)^{-1}\sum_{t=1}^TE [I_tY_{t+1}(A_t - \rho_t)]Z_t
		= \left(\sum_{t=1}^T \rho_t(1-\rho_t) E[I_t]Z_tZ_t'\right)^{-1}\sum_{t=1}^T E[I_t]\rho_t(1-\rho_t)\beta(t) Z_t
		\]
		as in \eqref{tilde.alpha} and \eqref{tilde.beta}. We can see that if $\beta(t) = Z_t'\beta^*$, then $\left(\sum_{t=1}^T \rho_t(1-\rho_t) E[I_t]Z_tZ_t'\right)^{-1}\sum_{t=1}^T E[I_t]\rho_t(1-\rho_t)\beta(t) Z_t = \left(\sum_{t=1}^T \rho_t(1-\rho_t) E[I_t]Z_tZ_t'\right)^{-1}\sum_{t=1}^T E[I_t]\rho_t(1-\rho_t)Z_tZ_t'\beta^*= \beta^*$. This is true even if $E[Y_{t+1}|I_t = 1] \neq B_t'\tilde{\alpha}$.
		
		We can easily see that,
		\begin{align}
			\sqrt{N}(\hat \theta - \tilde \theta) & = \sqrt{N}\Bigg\{\big(\mathbb{P}_N\sum_{t=1}^T I_tX_tX_t' \big)^{-1} \Big[\big(\mathbb{P}_N \sum_{t=1}^T I_tY_{t+1}X_t\big)- \big(\mathbb{P}_N\sum_{t=1}^T I_tX_tX_t'\big)\tilde{\theta}\Big]\Bigg\} \notag \\
			& = \sqrt{N}\bigg\{E\big[\sum_{t=1}^T I_tX_tX_t' \big]^{-1}\big(\mathbb{P}_N \sum_{t=1}^T I_t\tilde{\epsilon}_tX_t\big)   \bigg\} + o_p(\mathbf{1}), \label{thetahat}
		\end{align}
		where $o_p(\mathbf{1})$ is a term that converges in probability to zero as $N$ goes to infinity. By the definitions of $\tilde{\alpha}$ and $\tilde \beta$, we have
		\begin{align*}
			E\big[\sum_{t=1}^T I_t\tilde{\epsilon}_tX_t\big] =  \begin{pmatrix}
				\sum_{t=1}^T E[I_t]\left(\alpha(t) - B_t'\tilde{\alpha}\right)B_t \\
				\sum_{t=1}^T E[I_t] \rho_t(1-\rho_t)\big(\beta(t) - Z_t'\tilde{\beta}\big) Z_t
			\end{pmatrix} = \mathbf{0}
		\end{align*}
		So that under moments conditions, we have $\sqrt{N}(\hat \theta - \tilde \theta) \rightarrow N(0, \Sigma_\theta )$, where $\Sigma_\theta$ is given by $$\Sigma_\theta = E\big[\sum_{t=1}^T I_tX_tX_t' \big]^{-1} E\big[\sum_{t=1}^T I_t\tilde{\epsilon}_tX_t \times \sum_{t=1}^T I_t\tilde{\epsilon}_tX_t'\big] E\big[\sum_{t=1}^T I_tX_tX_t' \big]^{-1} = \begin{bmatrix}
		\Sigma_\alpha &\Sigma_{\alpha\beta} \\
		\Sigma_{\alpha\beta}' &\Sigma_\beta
		\end{bmatrix}.$$ In particular, $\hat \beta$ satisfies $\sqrt{N}(\hat \beta - \tilde{\beta}) \rightarrow N(0, \Sigma_\beta)$ and $\Sigma_\beta$ is given by
		\begin{align*}
			\Sigma_\beta &= \bigg(\sum_{t=1}^T E[I_t]\rho_t(1-\rho_t) Z_tZ_t' \bigg)^{-1} E\bigg[\sum_{t=1}^T \tilde{\epsilon}_tI_t(A_t -\rho_t)Z_t\times\sum_{t=1}^T\tilde{\epsilon}_tI_t(A_t -\rho_t)Z_t'\bigg]  \bigg(\sum_{t=1}^T E[I_t] \rho_t(1-\rho_t) Z_tZ_t' \bigg)^{-1} = Q^{-1}WQ^{-1}.
		\end{align*}
	\end{proof}
	
	\begin{lemma}[Asymptotic Variance Under Working Assumptions]
		\label{simpli}
		Assuming working assumptions (\ref{assum:a})-(\ref{assum:d}) are true, then under the alternative hypothesis $H_1$ in (\ref{alternative2}), $\Sigma_{\beta}$ and $c_N$ are given by
		\begin{equation*}
			\Sigma_\beta  = \bar\sigma^2\bigg( \sum_{t=1}^{T}E[I_t]\rho_t(1-\rho_t)Z_tZ_t'\bigg)^{-1},
		\end{equation*}
		\begin{equation*}
			c_N = N{d}'\bigg( \sum_{t=1}^{T}E[I_t]\rho_t(1-\rho_t)Z_tZ_t'\bigg) {d}.
		\end{equation*}
		
	\end{lemma}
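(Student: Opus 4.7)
My plan is to start from the sandwich expression $\Sigma_\beta = Q^{-1} W Q^{-1}$ given by Lemma \ref{lse} and reduce $W$ to $\bar\sigma^2 Q$ using the four working assumptions; the expressions for $\Sigma_\beta$ and $c_N$ then follow immediately.

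First, I would use assumptions (\ref{assum:a}) and (\ref{assum:b}) together with the consistency of the least squares estimator (Lemma \ref{lse}) to conclude $\tilde\alpha = \alpha$ and $\tilde\beta = \beta$, so that for any $t$ with $I_t=1$,
\[
\tilde\epsilon_t = Y_{t+1} - B_t'\alpha - (A_t-\rho_t)Z_t'\beta,
\]
and $E[\tilde\epsilon_t \mid I_t=1, A_t] = 0$ by the derivation already given in the paper (conditional mean equals $B_t'\alpha + (A_t-\rho_t)Z_t'\beta$ under randomization plus (\ref{assum:a})--(\ref{assum:b})). This mean-zero property is the key hook for the variance calculations that follow.

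Next I would decompose $W = \sum_{t,s} Z_t Z_s' \, w_{ts}$ where $w_{ts} = E[\tilde\epsilon_t \tilde\epsilon_s I_t I_s (A_t-\rho_t)(A_s-\rho_s)]$ and handle the diagonal and off-diagonal pieces separately. For the diagonal $t=s$, using $E[\tilde\epsilon_t \mid I_t=1, A_t]=0$ I get $E[\tilde\epsilon_t^2 \mid I_t=1, A_t] = \Var(Y_{t+1}\mid I_t=1, A_t)$, which by assumption (\ref{assum:c}) is the constant $\bar\sigma^2$; iterated expectations then yield $w_{tt} = \bar\sigma^2 \, E[I_t]\rho_t(1-\rho_t)$. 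For the off-diagonal $t \neq s$, take WLOG $t>s$ and condition on $(I_t, I_s, A_t, A_s)$: by assumption (\ref{assum:d}), $E[\tilde\epsilon_t\tilde\epsilon_s \mid I_t=1, I_s=1, A_t, A_s]$ is a constant $\kappa_{ts}$ not depending on the treatments, so
\[
w_{ts} = \kappa_{ts}\, E\bigl[I_t I_s (A_t-\rho_t)(A_s-\rho_s)\bigr].
\]
The inner expectation vanishes because the micro-randomization gives $E[A_t-\rho_t \mid \text{history through time }t-1, I_t] = 0$, and $I_s, A_s, I_t$ are all part of that history when $t>s$. Hence $w_{ts}=0$, $W = \bar\sigma^2 \sum_t E[I_t]\rho_t(1-\rho_t) Z_t Z_t' = \bar\sigma^2 Q$, and $\Sigma_\beta = Q^{-1}(\bar\sigma^2 Q)Q^{-1} = \bar\sigma^2 Q^{-1}$, which is the first claim.

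For the non-centrality, I plug this $\Sigma_\beta$ into the definition $c_N = N(\bar\sigma \tilde d)'\Sigma_\beta^{-1}(\bar\sigma \tilde d)$ from the main text. Under assumption (\ref{assum:b}), $\beta(t)=Z_t'\beta$, so the projection in the definition of $\tilde d$ collapses to $\tilde d = \beta/\bar\sigma = d$, giving $c_N = N\bar\sigma^2 d' (\bar\sigma^2 Q^{-1})^{-1} d = N d' Q d$, the second claim. The main obstacle is the off-diagonal argument: one must be careful to use randomization through the right conditioning (iterated expectation innermost over $A_t$) so that assumption (\ref{assum:d}) can be invoked to strip $\tilde\epsilon_t\tilde\epsilon_s$ from the treatment factor before randomization zeros it out.
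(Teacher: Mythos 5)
Your proposal is correct and follows essentially the same route as the paper's proof: split $W$ into diagonal and off-diagonal terms, use assumptions (a)--(c) to get $E[\tilde\epsilon_t\mid I_t=1,A_t]=0$ and the constant conditional variance $\bar\sigma^2$ for the diagonal, and use assumption (d) to factor the off-diagonal term so that the randomization of $A_t$ given the past kills it, yielding $W=\bar\sigma^2 Q$. The $c_N$ computation via $\tilde d=d$ is likewise identical to the paper's.
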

	
	\begin{proof}
		Note that under assumptions (\ref{assum:b}) and (\ref{assum:c}), we have $Z_t'\tilde{\beta} = \beta(t)$ and $\Var(Y_{t+1}|I_t=1,A_t) = \bar \sigma$ for each t, and $\tilde{d} = d$. The middle term, $W$, in $\Sigma_\beta$ can be separated by two terms, e.g., $E\bigg[\sum_{t=1}^T \tilde{\epsilon}_tI_t(A_t -\rho_t)Z_t\times\sum_{t=1}^T\tilde{\epsilon}_tI_t(A_t -\rho_t)Z_t'\bigg] = \sum_{t=1}^T E\big[ \tilde{\epsilon}_t^2I_t(A_t -\rho_t)^2\big]Z_tZ_t' + \sum_{i\neq j}^TE\big[ \tilde{\epsilon}_i\tilde{\epsilon}_jI_iI_j(A_i -\rho_i)(A_j - \rho_j)\big]Z_iZ_j'$. Under assumptions (\ref{assum:a}), (\ref{assum:b}) and (\ref{assum:c}), we have $E[\tilde{\epsilon}_t|I_t = 1, A_t] = 0$ and $E\big[ \tilde{\epsilon}_t^2I_t(A_t -\rho_t)^2\big] = E[I_t]\rho_t(1-\rho_t)\bar \sigma^2$. Furthermore, suppose $i>j$, then $E\big[ \tilde{\epsilon}_i\tilde{\epsilon}_jI_iI_j(A_i -\rho)(A_j - \rho)\big] = E[I_iI_j(A_j-\rho)(A_i-\rho)] \times E[\tilde\epsilon_t\tilde\epsilon_s | I_t=1, I_s = 1, A_t, A_s] = 0$, because $A_i \indep \{I_i, I_j, A_j\}$ and the first term is 0. $W$ is then given by $$W = \bar \sigma^2 \sum_{t=1}^T E[I_t]\rho_t(1-\rho_t)Z_tZ_t',$$
		so that $\Sigma_\beta  = \bar\sigma^2\big( \sum_{t=1}^{T}E[I_t]\rho_t(1-\rho_t)Z_tZ_t'\big)^{-1}$ and $c_N = N(\bar\sigma \tilde{d})^{'}\Sigma_\beta^{-1}(\bar\sigma \tilde{d}) = N{d}'\bigg( \sum_{t=1}^{T}E[I_t]\rho_t(1-\rho_t)Z_tZ_t'\bigg) {d}$.
	\end{proof}
	\textit{Remark:} Working assumption (d) can be replaced by assuming $E[Y_{t+1}|I_t = 1, A_t, I_s = 1, A_s] - E[Y_{t+1}|I_t = 1, A_t]$ does not depend on $A_t$ for any $s < t$, or some Markovian type of assumption, e.g., $Y_{t+1} \indep \{ Y_{s+1}, I_s, A_s, s < t\}|I_t, A_t$. Either of them implies  $E\big[ \tilde{\epsilon}_i\tilde{\epsilon}_jI_iI_j(A_i -\rho_i)(A_j - \rho_j)\big]=0$, so that $\Sigma_\beta$ and $c_N$ have the same simplified forms.

	\paragraph*{Rationale for multiple of F distribution}
	The distribution of the quadratic form, $n(\bar X-\mu)'\hat\Sigma^{-1}(\bar X-\mu)$ constructed from a random sample of size $n$ of N($\mu,\Sigma$) random variables in which $\hat\Sigma$ is the sample covariance matrix follows a   Hotelling's $T$-squared distribution.  The Hotelling's $T$-squared distribution is a  multiple of the F distribution, $\frac{d_1(d_1+d_2-1)}{d_2} F_{d_1, d_2}$  in which $d_1$ is the dimension of $\mu$, and $d_2$ is the sample size.   Our sample sample approximation replaces $d_1$ by $p$ (the number of parameters in the test statistic) and $d_2$ by $n-q-p$ (the sample size minus the number of nuisance parameters minus $d_1$).
	
	\paragraph{{Formula for adjusted $\hat W$ and $\hat Q$}}

	Define a individual-specific residual vector $\hat e$ as the $T\times 1$ vector with $t$th entry $\hat e_t=Y_{t+1}-  I_t
	B_t'\hat\alpha - I_t(A_t -\rho_t)Z_t'\hat\beta$. For each individual define the $t$th row of the $T\times (p+q)$ individual-specific matrix $X$ by $(I_tB_t', I_t(A_t-\rho_t)Z_t)$.  Then define $H=X \left[ \mathbb{P}_NX'X\right]^{-1}X'$.
	The matrix $\hat Q^{-1}$ is given by the lower right $p\times p$ block in the inverse of $\left[ \mathbb{P}_NX'X\right]$; the matrix $\hat W$ is given by the lower right $p\times p$ block in $  \mathbb{P}_N\left[ X^{T}(I-H)^{-1}\hat{e}\hat{e}'(I-H)^{-1}X\right]$.\\
	
\end{appendices}


\newpage

\setcounter{table}{0}
\renewcommand{\thetable}{\arabic{table}B}
\setcounter{section}{1}

\begin{appendices}
\section{Further Simulations and Details}
\label{sec:A2}

\subsection{Simulation Results When Working Assumptions are True}
\label{sec:B1}
We conduct a variety of simulations in settings in which the working assumptions hold, the scientist provides the correct pattern for the expected availability, $\tau_t = E[I_t]$ and under the alternate,  the standardized proximal main effect is $d(t) = Z_t'd$. Here we will mainly focus on the setup where the duration of the study is 42 days and there are 5 decision times within each day, but similar results can be obtained in different setups; see below. The randomization probability is 0.4, i.e. $\rho = \rho_t= P(A_t = 1) =  0.4$. The sample size formula is given in (\ref{cN}) and (\ref{samplesizeeq}).  The test statistic is given by (\ref{teststat}) in which $B_t$ and $Z_t$ equal to $\left(1, \lfloor \frac{t-1}{5}\rfloor,\lfloor \frac{t-1}{5}\rfloor^2 \right)'$. All simulations are based on 1,000 simulated data sets. The significance level is 0.05 and the desired power is 80\%.

In the first simulation,  the data for each simulated subject is generated sequentially as follows. For $t = 1,\dots, T=210$, $I_t$, $A_t$ and $Y_{t+1}$ are generated by
\begin{align*}
& I_t \stackrel{Ber}{\sim} \big(\tau_t\big), \ \ A_t \stackrel{Ber}{\sim} \big(\rho \big)\\
& Y_{t+1} = \alpha(t) + (A_t - \rho)d(t) + \epsilon_t, \text{ if $I_t = 1$}
\end{align*}
where $d(t) = Z_t'd$ and $\tau_t$ are same as in the sample size model. The conditional mean, $E[Y_{t+1}|I_t=1]=\alpha(t)$ is given by $\alpha(t) = \alpha_1 + \alpha_2 \lfloor \frac{t-1}{5} \rfloor + \alpha_3 \lfloor \frac{t-1}{5} \rfloor ^2$, where $\alpha_1 = 2.5$, $\alpha_2 = 0.727$,$\alpha_3 = - 8.66\times 10^{-4}$ (so that $(1/T)\sum_t\alpha(t) - \alpha(1)= 1$, $\operatorname{arg\,max}_t\alpha(t) = T$). We consider 5 differing distributions for the errors $\{\epsilon_t\}_{t = 1}^{T}$: independent normal; independent (scaled) Student's $t$ distribution with 3 degrees of freedom; independent (centered) exponential distribution with $\lambda = 1$;   a Gaussian AR(1) process, e.g., $\epsilon_t = \phi\epsilon_{t-1} + v_t$, where $v_t$ is white noise with variance $\sigma_v^2$ such that $\Var(\epsilon_t) = 1$; and lastly a Gaussian AR(5) process, e.g., $\epsilon_t = \frac{\phi}{5}\sum_{j = 1}^{5}\epsilon_{t-j} + v_t$, where $v_t$ is white noise with variance $\sigma_v^2$ such that $\Var(\epsilon_t) = 1$.   In all cases the errors are scaled to have mean 0 and variance 1 (i.e. $E[\epsilon_t |I_t = 1]=0$, $\Var[\epsilon_t |A_t, I_t = 1]=1$). Additionally four availability patterns, e.g., time varying values for $\tau_t=E[I_t]$, are considered; see Figure~(\ref{ShapeOfTau}). The simulated type 1 error rate and power when the duration of study is 42 days are reported in Table \ref{worktrue42T} and \ref{worktrue42P}. The simulation results in other setups, e.g., the length of the study is 4 week and 8 week, are reported in Table \ref{worktrue2856}. The associated sample sizes are given in Table \ref{worktrueformula}.

Since neither the working assumptions nor the inputs to the sample size formula specify the dependence of the availability indicator, $I_t$ on past treatment. In the second simulation, we consider  the setting in which the availability decreases as the number of treatments provided in the recent past increase. In particular,  the data are generated as follows,
\begin{align*}
& I_t \stackrel{Ber}{\sim} \big(\tau_t + \eta \sum_{j=1}^{5} (A_{t-j}I_{t-j} - E[A_{t-j}I_{t-j}])\big), \ \ A_t \stackrel{Ber}{\sim} \big(\rho \big)\\
& Y_{t+1} = \alpha(t) + (A_t - \rho)d(t) + \epsilon_t, \text{ if $I_t = 1$}
\end{align*}
Note that since we center $\sum_{j=1}^{5} A_{t-j}I_{t-j}$ in the generative model of $I_t$, the expected availability is $\tau_t$. The specification of $\alpha(t)$, $\beta(t)$ and $\epsilon_t$ are same as in the first simulation. The simulated type I error rate and power are reported Table \ref{worktruedep}.

\subsection{Further Details When Working Assumptions are False }

\subsubsection{Working Assumption (a) is Violated.}
\label{sec:B2}
Here we consider another setting in which the working assumption (a) is violated, e.g., the underlying true $E[Y_{t+1}|I_t = 1]$ follows a non-quadratic form (recall that $B_t$ is given by $\left(1, \lfloor \frac{t-1}{5}\rfloor,\lfloor \frac{t-1}{5}\rfloor^2 \right)'$). The data is generated as follows
\begin{align*}
& I_t \stackrel{Ber}{\sim} \big(\tau_t\big), \ \ A_t \stackrel{Ber}{\sim} \big(\rho \big)\\
& Y_{t+1} = \alpha(t) +  (A_t - \rho)Z_t'd + \epsilon_t, \text{ if $I_t = 1$}
\end{align*}
where $\alpha(t)=E[Y_{t+1}|I_t = 1] $ is provided in Figure \ref{ShapeOfAlpha}. For each case, $\alpha(t)$ satisfies $\alpha(1) = 2.5$ and $(1/T)\sum_{t=1}^{T} - \alpha(1) = 0.1$. The error terms $\{\epsilon_t\}_{t=1}^N$ are i.i.d N$(0, 1)$. The day of maximal proximal effect is assumed to be 29.  Additionally, different values of averaged standardized treatment effect and four patterns of availability in Figure \ref{ShapeOfTau} with average 0.5 are considered. The simulation results are reported in Table \ref{workfalsea2}.

\subsubsection{Additional Simulation Results When Other Working Assumptions are False}
The main body of the paper reports part of the results when working assumptions (b), (c) and (d) are violated. Additional simulation results are provided here. In particular,  the simulation result is reported in Table \ref{workfalsecase1full} when $d(t)$ follows other non-quadratic forms, e.g., working assumption (b) is false; see Figure \ref{ShapeOfBetafull}. The simulated Type I error rate and power when working assumption (c) is false are reported in Table \ref{workfalsecase2T}. The simulated Type I error rate when  working assumption (d) is violated is reported in Table \ref{workfalsecase3T}.

\subsubsection{Simulation Results when $\bar d$ and $\bar \tau$ are misspecified.}
As discussed in the paper, the first scenario considers the setting in which the scientist provides the correct availability pattern, $\{E[I_t]\}_{t=1}^T$, the correct time at which the maximal standardized proximal main effect is achieved ($\arg\max_t Z_t'd$) and the correct  initial standardized proximal main effect ($Z_1'd=d_1=0$) but provides too low a value of the averaged across time, standardized proximal main effect $\bar d=\frac{1}{T}\sum_{t=1}^{T}Z_t'd$.  The simulated power is provided in Table~\ref{wrongguessave}. In the second scenario, the scientist provides the correct $\arg\max_t Z_t'd$, correct $Z_1'd=d_1=0$, correct $\bar d=\frac{1}{T}\sum_{t=1}^{T}Z_t'd$ and although the scientist's time-varying pattern of availability is correct, the magnitude, e.g., the average availability,  is underestimated. The simulation result is in Table~\ref{wrongguessavetau}.

\begin{table}[H]
	\centering
	\begin{threeparttable}
		\caption{Sample Sizes when the proximal treatment effect satisfies $d(t) = Z_t'd$. The significance level is 0.05. The desired power is 0.80.}
		\begin{tabular}{|c|c|c|ccc|ccc|}

			\multirow{3}{*}{Duration of Study} & \multirow{3}{*}{Availability Pattern} & \multirow{3}{*}{Max} &      & $\bar \tau$ = 0.5 &      &      & $\bar \tau$= 0.7 &  \\ \cline{4-9}
			                                   &                                       &                      &          \multicolumn{6}{c|}{Average Proximal Effect }           \\ \cline{4-9}
			                                   &                                       &                      & 0.10 &       0.08        & 0.06 & 0.10 &       0.08       & 0.06 \\ \hline
			     \multirow{12}{*}{4-week}      &      \multirow{3}{*}{Pattern 1}       &          15          &  59  &        89         & 154  &  43  &        65        & 112  \\
			                                   &                                       &          22          &  60  &        91         & 158  &  44  &        66        & 114  \\
			                                   &                                       &          29          &  58  &        87         & 152  &  43  &        64        & 110  \\ \cline{2-9}
			                                   &      \multirow{3}{*}{Pattern 2}       &          15          &  59  &        89         & 154  &  43  &        65        & 112  \\
			                                   &                                       &          22          &  60  &        92         & 159  &  44  &        67        & 115  \\
			                                   &                                       &          29          &  58  &        89         & 154  &  43  &        64        & 111  \\ \cline{2-9}
			                                   &      \multirow{3}{*}{Pattern 3}       &          15          &  59  &        90         & 157  &  44  &        66        & 113  \\
			                                   &                                       &          22          &  63  &        96         & 167  &  46  &        69        & 119  \\
			                                   &                                       &          29          &  62  &        94         & 163  &  45  &        67        & 115  \\ \cline{2-9}
			                                   &      \multirow{3}{*}{Pattern 4}       &          15          &  59  &        89         & 155  &  43  &        65        & 112  \\
			                                   &                                       &          22          &  57  &        86         & 150  &  43  &        64        & 110  \\
			                                   &                                       &          29          &  54  &        82         & 142  &  41  &        61        & 105  \\ \hline
			     \multirow{12}{*}{6-week}      &      \multirow{3}{*}{Pattern 1}       &          22          &  41  &        61         & 105  &  31  &        45        &  76  \\
			                                   &                                       &          29          &  42  &        64         & 109  &  32  &        47        &  79  \\
			                                   &                                       &          36          &  41  &        62         & 106  &  31  &        45        &  77  \\ \cline{2-9}
			                                   &      \multirow{3}{*}{Pattern 2}       &          22          &  41  &        61         & 105  &  31  &        45        &  76  \\
			                                   &                                       &          29          &  43  &        64         & 110  &  32  &        47        &  80  \\
			                                   &                                       &          36          &  42  &        62         & 107  &  31  &        46        &  77  \\ \cline{2-9}
			                                   &      \multirow{3}{*}{Pattern 3}       &          22          &  42  &        62         & 106  &  31  &        46        &  77  \\
			                                   &                                       &          29          &  44  &        66         & 114  &  33  &        48        &  82  \\
			                                   &                                       &          36          &  43  &        65         & 112  &  32  &        47        &  80  \\ \cline{2-9}
			                                   &      \multirow{3}{*}{Pattern 4}       &          22          &  41  &        62         & 106  &  31  &        45        &  77  \\
			                                   &                                       &          29          &  41  &        62         & 106  &  31  &        46        &  78  \\
			                                   &                                       &          36          &  40  &        59         & 101  &  30  &        44        &  74  \\ \hline
			     \multirow{12}{*}{8-week}      &      \multirow{3}{*}{Pattern 1}       &          29          &  32  &        47         &  80  &  25  &        35        &  58  \\
			                                   &                                       &          36          &  33  &        49         &  84  &  26  &        37        &  61  \\
			                                   &                                       &          43          &  33  &        48         &  82  &  25  &        36        &  60  \\ \cline{2-9}
			                                   &      \multirow{3}{*}{Pattern 2}       &          29          &  32  &        47         &  80  &  25  &        35        &  58  \\
			                                   &                                       &          36          &  34  &        49         &  84  &  26  &        37        &  61  \\
			                                   &                                       &          43          &  33  &        49         &  82  &  25  &        36        &  60  \\ \cline{2-9}
			                                   &      \multirow{3}{*}{Pattern 3}       &          29          &  33  &        48         &  82  &  25  &        36        &  59  \\
			                                   &                                       &          36          &  35  &        51         &  87  &  26  &        38        &  63  \\
			                                   &                                       &          43          &  34  &        50         &  86  &  26  &        37        &  62  \\ \cline{2-9}
			                                   &      \multirow{3}{*}{Pattern 4}       &          29          &  33  &        48         &  81  &  25  &        36        &  59  \\
			                                   &                                       &          36          &  33  &        49         &  83  &  25  &        36        &  61  \\
			                                   &                                       &          43          &  32  &        47         &  80  &  25  &        35        &  59  \\ \hline
		\end{tabular}
		\begin{tablenotes}
			\item   \lq\lq Max\rq\rq is the day in which the maximal proximal effect is attained.   $\bar\tau = (1/T)\sum_{t=1}^{T}E[I_t]$ is the average availability.
		\end{tablenotes}
		\label{worktrueformula}
	\end{threeparttable}
	
\end{table}

\newpage
\begin{table}[H]
	
	\centering
	\begin{threeparttable}
		
		\caption{Simulated Type I error rate ($\%$) when working assumptions are true. Duration of the study is 6-week. The associated sample size is given in Table \ref{worktrueformula}.}
		\begin{tabular}{|c|c|c|ccc||ccc|}

			\multirow{3}{*}{Error Term}      & \multirow{3}{*}{Availability Pattern} & \multirow{3}{*}{Max} &      & $\bar \tau$ = 0.5 &      &      & $\bar \tau$= 0.7 &  \\ \cline{4-9}
			&                                       &                      &          \multicolumn{6}{c|}{Average Proximal Effect }           \\ \cline{4-9}
			&                                       &                      & 0.10 &       0.08        & 0.06 & 0.10 &       0.08       & 0.06 \\ \hline
			\multirow{12}{*}{i.i.d. Normal}    &      \multirow{3}{*}{Pattern 1}       &          22          & 3.8  &        4.5        & 4.9  & 4.6  &       5.3        & 4.8  \\
			&                                       &          29          & 4.7  &        6.0        & 4.6  & 4.0  &       3.2        & 5.0  \\
			&                                       &          36          & 5.0  &        5.4        & 4.9  & 4.3  &       4.8        & 4.6  \\ \cline{2-9}
			&      \multirow{3}{*}{Pattern 2}       &          22          & 4.8  &        4.1        & 4.8  & 4.4  &       3.5        & 4.1  \\
			&                                       &          29          & 4.3  &        6.2        & 3.2  & 4.6  &       4.2        & 4.2  \\
			&                                       &          36          & 4.5  &        4.8        & 5.2  & 4.5  &       3.5        & 5.4  \\ \cline{2-9}
			&      \multirow{3}{*}{Pattern 3}       &          22          & 4.7  &        4.5        & 6.3  & 4.4  &       4.9        & 4.9  \\
			&                                       &          29          & 4.1  &        5.1        & 4.6  & 4.3  &       6.0        & 5.6  \\
			&                                       &          36          & 4.7  &        4.4        & 4.6  & 4.1  &       5.1        & 4.4  \\ \cline{2-9}
			&      \multirow{3}{*}{Pattern 4}       &          22          & 5.4  &        3.5        & 4.5  & 4.8  &       4.7        & 5.0  \\
			&                                       &          29          & 5.2  &        4.5        & 4.5  & 5.0  &       5.0        & 5.1  \\
			&                                       &          36          & 3.8  &        4.1        & 5.4  & 4.7  &       5.0        & 5.9  \\ \hline
			\multirow{3}{*}{i.i.d. t dist.}    &      \multirow{3}{*}{Pattern 1}       &          22          & 4.3  &        4.4        & 3.2  & 4.1  &       4.1        & 5.2  \\
			&                                       &          29          & 5.0  &        3.8        & 3.2  & 3.7  &       4.2        & 6.3  \\
			&                                       &          36          & 4.3  &        4.5        & 4.0  & 5.0  &       5.7        & 5.4  \\ \hline
			\multirow{3}{*}{i.i.d. Exp.}      &      \multirow{3}{*}{Pattern 1}       &          22          & 4.5  &        4.6        & 4.4  & 3.7  &       \textbf{7.1}        & 3.1  \\
			&                                       &          29          & 4.5  &        4.6        & 4.2  & 4.5  &       4.5        & 4.7  \\
			&                                       &          36          & 2.7  &        4.8        & 4.8  & 3.9  &       3.7        & 3.4  \\ \hline
			\multirow{3}{*}{AR(1), $\phi = -0.6$} &      \multirow{3}{*}{Pattern 1}       &          22          & 4.3  &        5.3        & 4.6  & 3.8  &       4.2        & 4.0  \\
			&                                       &          29          & 4.6  &        5.4        & 5.1  & 4.0  &       4.4        & 4.3  \\
			&                                       &          36          & 4.7  &        4.0        & 4.0  & 4.1  &       4.2        & 3.9  \\ \hline
			\multirow{3}{*}{AR(1), $\phi = -0.3$} &      \multirow{3}{*}{Pattern 1}       &          22          & 5.8  &        3.4        & 4.4  & 3.3  &       4.0        & 5.4  \\
			&                                       &          29          & 4.9  &        4.7        & 4.6  & 5.5  &       5.5        & 4.5  \\
			&                                       &          36          & 4.0  &        4.7        & 4.4  & 4.9  &       5.0        & 4.7  \\ \hline
			\multirow{3}{*}{AR(1), $\phi = 0.3$}  &      \multirow{3}{*}{Pattern 1}       &          22          & 4.6  &        4.6        & 4.9  & 4.3  &       5.4        & 4.1  \\
			&                                       &          29          & 4.8  &        5.3        & 4.1  & 4.3  &       4.2        & 5.2  \\
			&                                       &          36          & 3.6  &        3.9        & 4.9  & 4.8  &       4.9        & 4.9  \\ \hline
			\multirow{3}{*}{AR(1), $\phi = 0.6$}  &      \multirow{3}{*}{Pattern 1}       &          22          & 4.4  &        5.1        & 4.9  & 3.6  &       5.2        & 3.7  \\
			&                                       &          29          & 3.7  &        4.9        & 4.6  & 4.5  &       4.3        & 5.8  \\
			&                                       &          36          & 4.4  &        \textbf{6.7}        & 5.2  & 5.6  &       3.6        & 5.1  \\ \hline
			\multirow{3}{*}{AR(5), $\phi = -0.6$} &      \multirow{3}{*}{Pattern 1}       &          22          & 4.4  &        4.7        & 5.1  & 4.2  &       4.5        & 5.5  \\
			&                                       &          29          & 4.3  &        5.1        & 4.3  & 3.2  &       3.5        & 4.2  \\
			&                                       &          36          & 5.3  &        4.5        & 6.1  & 4.2  &       4.6        & 5.4  \\ \hline
			\multirow{3}{*}{AR(5), $\phi = -0.3$} &      \multirow{3}{*}{Pattern 1}       &          22          & 3.7  &        4.4        & 6.0  & 5.0  &       4.5        & 3.5  \\
			&                                       &          29          & 4.4  &        4.7        & 5.2  & 5.3  &       4.5        & 5.0  \\
			&                                       &          36          & 4.5  &        5.0        & 5.1  & 4.1  &       5.3        & 4.8  \\ \hline
			\multirow{3}{*}{AR(5), $\phi = 0.3$}  &      \multirow{3}{*}{Pattern 1}       &          22          & 5.3  &        4.3        & 5.7  & 4.8  &       4.1        & 4.3  \\
			&                                       &          29          & 3.9  &        4.8        & 4.1  & 4.0  &       4.3        & 4.9  \\
			&                                       &          36          & 4.2  &        5.5        & 5.1  & 3.6  &       4.5        & 3.6  \\ \hline
			\multirow{3}{*}{AR(5), $\phi = 0.6$}  &      \multirow{3}{*}{Pattern 1}       &          22          & 5.1  &        4.5        & 4.0  & 4.5  &       3.8        & 5.2  \\
			&                                       &          29          & 5.2  &        4.8        & 4.5  & 2.9  &       5.3        & 4.4  \\
			&                                       &          36          & 4.1  &        3.6        & 4.6  & 3.9  &       4.4        & 4.9  \\ \hline
		\end{tabular}
		\begin{tablenotes}
			\item   \lq\lq Max\rq\rq is the day in which the maximal proximal effect is attained.   $\bar\tau = (1/T)\sum_{t=1}^{T}E[I_t]$ is the average availability. $\phi$ is the parameter for AR(1) and AR(5) process.  Bold numbers are significantly(at .05 level) greater than .05.
		\end{tablenotes}
		\label{worktrue42T}		
	\end{threeparttable}
\end{table}

\begin{table}[H]
	\centering
	\begin{threeparttable}	
		\caption{Simulated power($\%$) when working assumptions are true. Duration of the study is 6-week. The associated sample size is given in Table \ref{worktrueformula} }
		\begin{tabular}{|c|c|c|ccc|ccc|}

			\multirow{3}{*}{Error Term}      & \multirow{3}{*}{Availability Pattern} & \multirow{3}{*}{Max} &      & $\bar \tau$ = 0.5 &      &      & $\bar \tau$= 0.7 &  \\ \cline{4-9}
			&                                       &                      &          \multicolumn{6}{c|}{Average Proximal Effect }           \\ \cline{4-9}
			&                                       &                      & 0.10 &       0.08        & 0.06 & 0.10 &       0.08       & 0.06 \\ \hline
			\multirow{12}{*}{i.i.d. Normal}    &      \multirow{3}{*}{Pattern 1}       &          22          & 80.9 &       80.0        & 81.0 & 78.7 &       77.5       & 80.7 \\
			&                                       &          29          & 78.4 &       80.6        & 77.8 & 80.6 &       78.7       & 79.0 \\
			&                                       &          36          & 80.2 &       80.0        & 79.6 & 79.4 &       80.2       &\textbf{ 77.0} \\ \cline{2-9}
			&      \multirow{3}{*}{Pattern 2}       &          22          & 80.3 &       78.1        & 78.8 & 80.6 &       79.6       & 79.8 \\
			&                                       &          29          & 80.3 &       79.1        & 80.2 &\textbf{ 77.4} &       79.9       & 79.9 \\
			&                                       &          36          & \textbf{76.8} &       79.3        & 80.2 & 78.5 &       78.4       & 80.0 \\ \cline{2-9}
			&      \multirow{3}{*}{Pattern 3}       &          22          & 83.5 &       81.5        & 77.7 & 78.5 &       81.3       & 78.7 \\
			&                                       &          29          & 77.9 &       79.1        & 78.5 & 77.8 &       78.8       & 79.0 \\
			&                                       &          36          &\textbf{ 77.3} &       78.1        & 79.8 & 79.8 &       79.9       & 79.1 \\ \cline{2-9}
			&      \multirow{3}{*}{Pattern 4}       &          22          & \textbf{77.2} &       79.7        & 81.8 & 80.2 &       79.0       & 78.8 \\
			&                                       &          29          & 80.1 &       78.8        & 80.3 & 79.4 &       80.6       & 80.1 \\
			&                                       &          36          & 80.5 &       79.4        & 80.0 & 78.9 &       79.9       & 78.1 \\ \hline
			\multirow{3}{*}{i.i.d. t dist.}    &      \multirow{3}{*}{Pattern 1}       &          22          & 80.4 &       81.9        & 81.0 & 79.7 &       79.4       & 80.7 \\
			&                                       &          29          & 81.7 &       82.2        & 82.2 & 79.1 &       82.3       &\textbf{ 77.3} \\
			&                                       &          36          & 80.8 &       78.8        & 79.5 & 81.8 &       81.6       & 79.9 \\ \hline
			\multirow{3}{*}{i.i.d. Exp.}      &      \multirow{3}{*}{Pattern 1}       &          22          & 81.0 &       81.6        & 79.7 & \textbf{77.2} &       80.1       & 80.2 \\
			&                                       &          29          & 80.6 &       82.4        & 80.3 & 79.0 &       79.8       & 80.3 \\
			&                                       &          36          & 82.1 &       79.8        & 80.8 & 79.8 &       79.5       & 80.3 \\ \hline
			\multirow{3}{*}{AR(1), $\phi = -0.6$} &      \multirow{3}{*}{Pattern 1}       &          22          & 78.5 &       80.3        & 78.5 & 82.3 &       79.8       & 80.3 \\
			&                                       &          29          & 78.7 &       80.8        & 80.0 & \textbf{77.1} &       79.5       & 77.9 \\
			&                                       &          36          & 77.7 &       80.3        & 80.2 & 78.2 &       \textbf{77.4}       & 83.6 \\ \hline
			\multirow{3}{*}{AR(1), $\phi = -0.3$} &      \multirow{3}{*}{Pattern 1}       &          22          & 77.9 &       79.0        & 79.6 & 80.0 &       77.8       & 80.4 \\
			&                                       &          29          & 77.9 &       79.1        & 80.0 & 79.0 &       78.0       & 78.4 \\
			&                                       &          36          & 78.1 &       81.2        & 80.2 & 80.7 &       80.9       & 78.4 \\ \hline
			\multirow{3}{*}{AR(1), $\phi = 0.3$}  &      \multirow{3}{*}{Pattern 1}       &          22          & 80.2 &       78.5        & 80.8 & 80.5 &       79.6       & 82.6 \\
			&                                       &          29          & 78.0 &       80.0        & 80.0 & 78.0 &       79.4       & 80.1 \\
			&                                       &          36          & 77.6 &       82.5        & 80.6 & 77.0 &       78.9       & 82.0 \\ \hline
			\multirow{3}{*}{AR(1), $\phi = 0.6$}  &      \multirow{3}{*}{Pattern 1}       &          22          & 80.4 &       79.8        & 79.5 & 80.7 &       79.5       & 82.0 \\
			&                                       &          29          & 78.9 &       81.5        & 79.3 & 79.5 &       81.3       & 79.5 \\
			&                                       &          36          & 79.5 &       78.4        & 78.8 & 80.1 &       77.9       & 77.8 \\ \hline
			\multirow{3}{*}{AR(5), $\phi = -0.6$} &      \multirow{3}{*}{Pattern 1}       &          22          & 79.9 &       79.4        & 80.0 & 78.7 &       79.2       & 79.4 \\
			&                                       &          29          & 80.0 &       78.3        & 79.1 & \textbf{76.8} &       79.6       & 79.3 \\
			&                                       &          36          & 80.5 &       80.0        & 79.2 & 80.1 &       78.0       & 80.4 \\ \hline
			\multirow{3}{*}{AR(5), $\phi = -0.3$} &      \multirow{3}{*}{Pattern 1}       &          22          & 79.2 &       80.4        & 81.9 & 81.3 &       77.7       & 79.1 \\
			&                                       &          29          & 80.0 &       82.3        & 80.5 & 80.5 &       82.2       & 79.2 \\
			&                                       &          36          & \textbf{75.9} &       78.7        & 79.3 & 79.0 &       79.4       & 79.9 \\ \hline
			\multirow{3}{*}{AR(5), $\phi = 0.3$}  &      \multirow{3}{*}{Pattern 1}       &          22          & 79.4 &       80.8        & 79.8 & 79.5 &       \textbf{77.3 }      & 81.2 \\
			&                                       &          29          & 78.0 &       79.2        & 79.2 & 79.2 &       80.5       & 78.4 \\
			&                                       &          36          & 78.3 &       79.1        & 78.1 & 80.7 &       80.5       & 79.5 \\ \hline
			\multirow{3}{*}{AR(5), $\phi = 0.6$}  &      \multirow{3}{*}{Pattern 1}       &          22          & 80.2 &       77.9        & 80.3 & 78.6 &       78.4       & 80.3 \\
			&                                       &          29          & \textbf{76.9} &       79.3        & 80.2 & 79.1 &       80.6       & 80.5 \\
			&                                       &          36          & 78.7 &       84.0        & 80.1 & 78.8 &       79.3       & 78.8 \\ \hline
		\end{tabular}
		\begin{tablenotes}
			\item   \lq\lq Max\rq\rq is the day in which the maximal proximal effect is attained.   $\bar\tau = (1/T)\sum_{t=1}^{T}E[I_t]$ is the average availability. $\phi$ is the parameter for AR(1) and AR(5) process. Bold numbers are significantly(at .05 level) less than .80.
		\end{tablenotes}
		\label{worktrue42P}	
	\end{threeparttable}
\end{table}

\begin{table}[H]
	\centering
	\begin{threeparttable}

		\caption{Simulated type 1 error rate($\%$) and power($\%$) when the duration of study is 4-week and 8-week.  Error terms follow i.i.d. N(0,1). The associated sample size is given in Table \ref{worktrueformula}. }
		\begin{tabular}{|c|c|c|ccc|ccc|}

			\multirow{3}{*}{Duration of Study} & \multirow{3}{*}{Availability Pattern} & \multirow{3}{*}{Max} &      & $\bar \tau$ = 0.5 &      &      & $\bar \tau$= 0.7 &  \\ \cline{4-9}
			&                                       &                      &          \multicolumn{6}{c|}{Average Proximal Effect }           \\
			\cline{4-9}
			&                                       &                      & 0.10 &       0.08        & 0.06 & 0.10 &       0.08       & 0.06 \\
			\hline
			
			\multirow{12}{*}{4-week}       &      \multirow{3}{*}{Pattern 1}       &          15          & 4.1 & 4.7 & 6.3 & 5.3 & 5.5 & 5.6 \\
			&                                       &          22          & 5.2 & 4.4 & 4.7 & 3.1 & 4.7 & 4.4 \\
			&                                       &          29           & 5.7 & 5.5 & 5.6 & 4.3 & 4.2 & 4.2 \\
			\cline{2-9}
			&      \multirow{3}{*}{Pattern 2}       &          15          & 4.8 & 4.8 & 5.0 & 5.0 & 5.2 & 5.3 \\
			&                                       &          22           & 5.1 & 5.2 & 4.7 & 3.7 & 4.2 & 3.7 \\
			&                                       &          29          & 5.6 & 5.1 & 4.2 & 4.2 & 4.9 & 4.4 \\
			\cline{2-9}
			&      \multirow{3}{*}{Pattern 3}       &          15          & 4.7 & 5.0 & 4.6 & 6.1 & 5.3 & 5.1 \\
			&                                       &          22          & 4.9 & 4.0 & \textbf{6.6} & 4.2 & 3.8 & 4.1 \\
			&                                       &          29           & 4.7 & 4.3 & 5.1 & 4.6 & 5.8 & 3.5 \\
			\cline{2-9}
			&      \multirow{3}{*}{Pattern 4}       &          15         & 4.9 & 4.6 & 4.8 & 3.0 & 5.9 & 3.8 \\
			&                                       &          22          & 3.5 & 5.1 & 4.5 & 5.2 & 3.8 & 6.0 \\
			&                                       &          29         & 4.4 & 6.4 & 4.7 & 4.4 & 4.3 & 4.7 \\ 			
			\hline
			
			\multirow{12}{*}{8-week}       &      \multirow{3}{*}{Pattern 1}       &          29          & 4.1 & 4.6 & 4.0 & 5.3 & 5.0 & 5.9 \\
			&                                       &          36          & 3.3 & 4.7 & \textbf{6.5} & 4.6 & 5.4 & 4.3 \\
			&                                       &          43          & 3.2 & 5.1 & 5.2 & 5.0 & 3.4 & 5.0 \\
			\cline{2-9}
			&      \multirow{3}{*}{Pattern 2}       &          29          & 3.9 & 5.0 & 4.5 & 4.2 & 3.7 & 4.1 \\
			&                                       &          36          & 3.8 & 4.6 & 4.9 & 4.5 & 3.4 & 5.2 \\
			&                                       &          43          & 3.9 & 5.4 & 5.0 & 3.4 & 3.8 & 5.0 \\
			\cline{2-9}
			&      \multirow{3}{*}{Pattern 3}       &          29          & 4.6 & 4.2 & 3.7 & 5.2 & 4.1 & 4.0 \\
			&                                       &          36          & 4.3 & 5.1 & 6.1 & 4.6 & 5.0 & 4.6 \\
			&                                       &          43          & 4.6 & 6.0 & 4.1 & 5.0 & 4.9 & 4.0 \\
			\cline{2-9}
			&      \multirow{3}{*}{Pattern 4}       &          29          & 4.5 & 5.2 & 2.9 & 3.6 & 5.3 & 4.4 \\
			&                                       &          36          & 4.5 & 5.2 & 3.7 & 2.7 & 3.7 & 4.7 \\
			&                                       &          43         & 4.2 & \textbf{7.1 }& 4.9 & 4.4 & 4.5 & 4.8 \\ 			
			\hline
			
			\multirow{12}{*}{4 week}       &      \multirow{3}{*}{Pattern 1}       &          15          & 80.4 & 79.0 & 78.5 & 79.6 & 82.8 & 80.3 \\
			&                                       &          22          & 78.8 & 78.7 & 80.7 & 78.7 & 79.2 & 80.0 \\
			&                                       &          29          & \textbf{76.2} & 80.6 & 80.1 & 81.3 & 80.1 & 79.1 \\
			\cline{2-9}
			&      \multirow{3}{*}{Pattern 2}       &          15          & 82.4 & 77.8 & \textbf{77.2} & \textbf{75.9 }& 80.0 & 78.9 \\
			&                                       &          22          & \textbf{77.2} & 80.3 & 81.5 & \textbf{75.8} & 80.7 & 82.0 \\
			&                                       &          29          & 80.1 & 79.3 & 80.1 & 78.0 & 77.7 & \textbf{76.9 }\\
			\cline{2-9}
			&      \multirow{3}{*}{Pattern 3}       &          15          & 79.3 & 79.8 & 79.2 & 79.1 & \textbf{76.5} & 80.8 \\
			&                                       &          22          & 80.0 & 80.0 & 79.0 & 79.0 & 80.2 & 81.8 \\
			&                                       &          29          & 79.4 & 80.7 & 79.3 & 80.4 & 79.6 & 79.2 \\
			\cline{2-9}
			&      \multirow{3}{*}{Pattern 4}       &          15          & 82.6 & 78.3 & 79.2 & 80.5 & 80.0 & 79.5 \\
			&                                       &          22          & 80.4 & 80.7 & 79.3 & 79.1 & 78.5 & 79.2 \\
			&                                       &          29          & 78.4 & 79.2 & 78.5 & 79.6 & 79.2 & 80.5 \\			
			\hline

			\multirow{12}{*}{8 week}       &      \multirow{3}{*}{Pattern 1}       &          29          & 79.7 &\textbf{ 77.3} & \textbf{76.4} & 79.1 & 82.2 & 79.6 \\
			&                                       &          36          & 78.8 & 78.6 & 81.5 & 80.3 & 78.2 & 79.6 \\
			&                                       &          43          & 80.4 & 77.8 & 78.7 & 79.1 & 80.3 & 80.1 \\
			\cline{2-9}
			&      \multirow{3}{*}{Pattern 2}       &          29          & 79.3 & 81.1 & 79.8 & 78.7 & 79.7 & 80.2 \\
			&                                       &          36          & 81.2 & 78.5 & 79.0 & 81.3 & 80.8 & 78.2 \\
			&                                       &          43          & 80.3 & 81.5 & 77.5 & \textbf{75.1} & 78.8 & 78.1 \\
			\cline{2-9}
			&      \multirow{3}{*}{Pattern 3}       &          29          & 80.1 & 79.0 &\textbf{ 77.1} & 78.2 & 80.4 & 78.8 \\
			&                                       &          36          & 79.5 & 79.9 & 79.6 & 80.0 & 80.8 & 79.6 \\
			&                                       &          43          & 80.5 & 79.5 & 79.6 & 79.4 & 79.4 & 80.2 \\
			\cline{2-9}
			&      \multirow{3}{*}{Pattern 4}       &          29          & 82.1 & 79.7 & 80.7 & 79.7 & 79.0 & 78.4 \\
			&                                       &          36          & 77.8 & 78.2 & 80.1 & 77.9 & \textbf{76.9} & 79.5 \\
			&                                       &          43          & 79.6 & 78.5 & 78.1 & 79.4 & 80.6 & 79.5 \\
			\hline			
		\end{tabular}
		\begin{tablenotes}

			\item   \lq\lq Max\rq\rq is the day in which the maximal proximal effect is attained.   $\bar\tau = (1/T)\sum_{t=1}^{T}E[I_t]$ is the average availability.  Bold numbers are significantly(at .05 level) greater than .05 (for type I error)and less than 0.80 (for power).
		\end{tablenotes}
		\label{worktrue2856}		
	\end{threeparttable}

\end{table}

\begin{table}[H]
	\centering
	\begin{threeparttable}
		\caption{Simulated Type I error rate($\%$) and power($\%$) when the availability indicator,  $I_t$ depends on the recent past treatments with $\eta = -0.2$. The expected availability is constant in $t$ and equal to $0.5$. Duration of study is 42 days.  The associated sample size is given in Table \ref{worktrueformula}. }
		\begin{tabular}{|c|c|c|ccc|ccc|ccc|ccc|}

			\multirow{3}{3em}{Error  Term} &  \multirow{3}{*}{$\phi$}   & \multirow{3}{*}{Max} &      & $\bar \tau$ = 0.5 &      &      & $\bar \tau$= 0.7 &  & &$\bar \tau$ = 0.5   & & &$\bar \tau$ = 0.7  & \\ \cline{4-15}
			&                            &                      &          \multicolumn{12}{c|}{Average Proximal Effect }            \\ \cline{4-15}
			&                            &                      & 0.10 &       0.08        & 0.06 & 0.10 &       0.08       & 0.06  & 0.10 &       0.08        & 0.06 & 0.10 &       0.08       & 0.06  \\
			\hline
			\multirow{12}{*}{AR(1)}    & \multirow{3}{*}{-0.6} &          22          & 4.8 & 5.4 & 4.5 & 3.4 & 5.8 & 3.7 & 81.5 & 78.0 & 79.4 & 81.7 & 77.9 & 80.7 \\
			&                            &          29          & 4.7 & 4.4 & 4.2 & 4.0 & 4.9 & 4.6 & 79.4 & 80.9 & 80.7 & 78.2 & 79.2 & 79.7 \\
			&                            &          36          & 4.3 & 5.3 & 4.4 & 4.2 & 3.9 & 5.5 & 79.5 & 81.5 & 79.8 & 80.2 & 79.2 & 80.7 \\
			\cline{2-15}
			& \multirow{3}{*}{-0.3} &          22          & 4.7 & 3.8 & 4.4 & 3.5 & 4.4 & 4.6 & 78.7 & 81.2 & 80.3 & 80.9 & 77.9 & 78.5 \\
			&                            &          29           & 3.8 & 4.0 & 4.9 & 3.5 & 5.0 & 4.4 & 80.1 & 79.5 & 81.2 & \textbf{77.3} & 79.5 & \textbf{77.1} \\
			&                            &          36          & 2.7 & 5.7 & 4.0 & 3.3 & 4.7 & 5.2 & \textbf{76.8} & 80.4 & 79.9 & 78.8 & 79.5 & 79.4 \\
			\cline{2-15}
			& \multirow{3}{*}{0.3} &          22          & 4.8 & 4.1 & 4.4 & 5.0 & 5.4 & 3.6 & 83.0 & 79.8 & 79.4 & 81.3 & 78.9 & 79.2 \\
			&                            &          29          & 4.9 & 4.6 & 5.0 & 4.4 & 5.5 & 5.6 & 79.5 & 80.3 & 82.2 & 78.5 & 80.7 & 77.6 \\
			&                            &          36          & 4.9 & 4.9 & 4.2 & 3.3 & 4.5 & 4.8 & 80.0 & 78.9 & 79.5 & 81.7 & 79.4 & 79.6 \\
			\cline{2-15}
			& \multirow{3}{*}{0.6} &          22          & 4.5 & 5.1 & 4.7 & 4.3 & 4.6 & 4.0 & 80.3 & 78.9 & 81.1 & 81.2 & 81.5 & 77.9 \\
			&                            &          29          & 3.4 & 4.5 & 5.1 & 4.4 & 4.3 & 4.6 & 79.3 & \textbf{76.2} & 79.4 & 81.3 & 80.6 & 79.4 \\
			&                            &          36          & 4.8 & 4.3 & 4.2 & 4.1 & 4.5 & 4.5 & 77.5 & 80.5 & 80.9 & 76.7 & 80.0 & 79.7 \\
			\hline 
			
			\multirow{12}{*}{AR(5)}    & \multirow{3}{*}{-0.6} &          22          & 4.8 & 4.6 & 4.3 & 3.7 & 4.7 & 3.5 & 81.9 & 81.4 & 81.6 & 79.8 & 78.3 & 78.9 \\
			&                            &          29          & 6.5 & 4.1 & 4.5 & 3.3 & 4.5 & 4.8 & 77.5 & 79.9 & 79.8 & 79.9 & 79.3 & 79.3 \\
			&                            &          36         & 3.5 & 5.7 & 4.4 & 4.6 & 4.7 & 5.7 & 77.8 & 80.8 & 78.6 & 77.9 & 79.2 & 81.7 \\
			\cline{2-15}
			& \multirow{3}{*}{-0.3} &          22          & 4.3 & 4.9 & 4.0 & 4.3 & 5.6 & 5.0 & 77.7 & 81.8 & 80.0 & 80.1 & 80.3 & 81.1 \\
			&                            &          29          & 3.9 & 4.0 & 5.0 & 3.2 & 5.7 & 5.1 & 80.0 & 80.9 & 80.3 & 80.6 & 80.3 & 77.8 \\
			&                            &          36          & 4.0 & 3.6 & 4.7 & 4.8 & 4.8 & 3.2 & 79.0 & 80.4 & 80.8 & 80.1 & 79.0 & \textbf{76.5} \\
			\cline{2-15}
			& \multirow{3}{*}{0.3} &          22          & 3.5 & 4.9 & 5.0 & 4.1 & 3.8 & 4.1 & \textbf{77.4} & 82.9 & 78.5 & 80.6 & 81.4 & 80.2 \\
			&                            &          29          & 4.6 & 6.1 & 4.7 & 4.7 & 4.1 & 4.1 & 78.7 & 82.0 & 78.0 & 81.4 & \textbf{76.5} & 81.3 \\
			&                            &          36          & 5.1 & 4.4 & 4.0 & 3.2 & 3.9 & 4.7 & 79.7 & 81.8 & 78.6 & 79.1 & \textbf{77.4} & 79.0 \\
			\cline{2-15}
			& \multirow{3}{*}{0.6} &          22         & 5.0 & 4.6 & 4.3 & 4.0 & 4.0 & 5.5 & 80.5 & 79.4 & 82.5 & 79.2 & 81.1 & 81.0 \\
			&                            &          29          & 5.6 & 4.3 & 6.9 & 5.6 & 3.4 & 3.1 & 78.3 & 80.0 & 80.5 & 80.8 & 80.4 & 78.4 \\
			&                            &          36          & 4.8 & 4.8 & 4.8 & 3.5 & 3.7 & 5.5 & 78.2 & 80.5 & 80.3 & 77.6 & 80.5 & 79.1 \\ 			
			\hline			
			
		\end{tabular}
		\begin{tablenotes}
			\item   \lq\lq Max\rq\rq is the day in which the maximal proximal effect is attained.   $\bar\tau = (1/T)\sum_{t=1}^{T}E[I_t]$ is the average availability. $\phi$ is the parameter for AR(1) and AR(5) process.  Bold numbers are significantly(at .05 level) greater than .05 and less than 0.80.
		\end{tablenotes}
		\label{worktruedep}		
	\end{threeparttable}
	
\end{table}

\newpage

\begin{table}[H]
	\centering	
	\begin{threeparttable}
		\caption{Simulated type I error rate($\%$) and power($\%$) when working assumption (a) is violated. Scenario 1. The average availability is 0.5. The day of maximal proximal effect is 29. }
		
		\begin{tabular}{|c|c|cccc|cccc|}

			\multirow{2}{*}{$\theta$}&\multirow{2}{*}{$\bar d$} &\multicolumn{8}{c|}{Availability Pattern}\\
			\cline{3-10}	
			& & Pattern 1 & Pattern 2 & Pattern 3 & Pattern 4 & Pattern 1 & Pattern 2 & Pattern 3 & Pattern 4 \\
			\hline
			
			\multirow{3}{*}{$0.5 \bar d$} & 0.10 & 5.5 & 4.6 & 4.2 & 5.1 & 79.7 & 79.4 & 80.5 & 80.1 \\
			& 0.08 & 5.1 & 4.4 & 5.4 & 4.6 & 80.4 & 78.9 & 80.4 & 78.7 \\
			& 0.06 & 4.1 & 5.5 & 4.6 & 4.3 & 77.5 & 82.7 & 81.0 & 81.0 \\
			\hline
			\multirow{3}{*}{$ \bar d$}& 0.10 & 4.8 & 4.3 & 3.7 & 4.1 & 79.3 & 78.3 & 77.8 & 79.4 \\
			& 0.08 & 5.4 & 4.9 & 4.6 & 5.5 & 78.8 & 79.3 & 78.0 & 80.6 \\
			& 0.06 & 4.4 & 3.5 & 5.1 & 4.6 & 78.4 & 79.3 & 79.0 & 80.4 \\
			\hline
			\multirow{3}{*}{$1.5 \bar d$} & 0.10 & 4.4 & 4.1 & 4.4 & 4.8 & 78.3 & 80.5 & 78.4 & 79.9 \\
			& 0.08 & 5.0 & 4.3 & 4.3 & 3.9 & 80.5 & 79.7 & 78.7 & 81.9 \\
			& 0.06 & 4.0 & 5.1 & 5.5 & 5.6 & \textbf{77.2} & 80.8 & 81.6 & 80.3 \\
			\hline
			\multirow{3}{*}{$2 \bar d$} & 0.10  & 4.1 & 3.8 & 5.0 & 5.5 & 77.7 & 78.8 & 79.0 & 78.4 \\
			& 0.08 & 4.0 & 5.0 & 3.7 & 5.7 & 79.3 & 81.5 & 79.1 & 79.4 \\
			& 0.06 & 4.9 & 4.3 & 5.2 & 5.3 & 80.8 & 79.0 & 77.5 & 80.9 \\ 		
			\hline
		\end{tabular}
		\begin{tablenotes}
			\item   $\bar d = (1/T)\sum_{t=1}^{T}Z_t' d$ is the average proximal effect. $\theta$ is the coefficient of $W_t$ in $E[Y_{t+1}|I_t = 1]$. Bold numbers are significantly (at .05 level) greater than .05 (for type I error rate) and lower than 0.80(for power).
		\end{tablenotes}
		
		\label{workfalsea1}
	\end{threeparttable}
\end{table}

\begin{figure}[H]
	\centering
	\includegraphics[width = 0.9\linewidth]{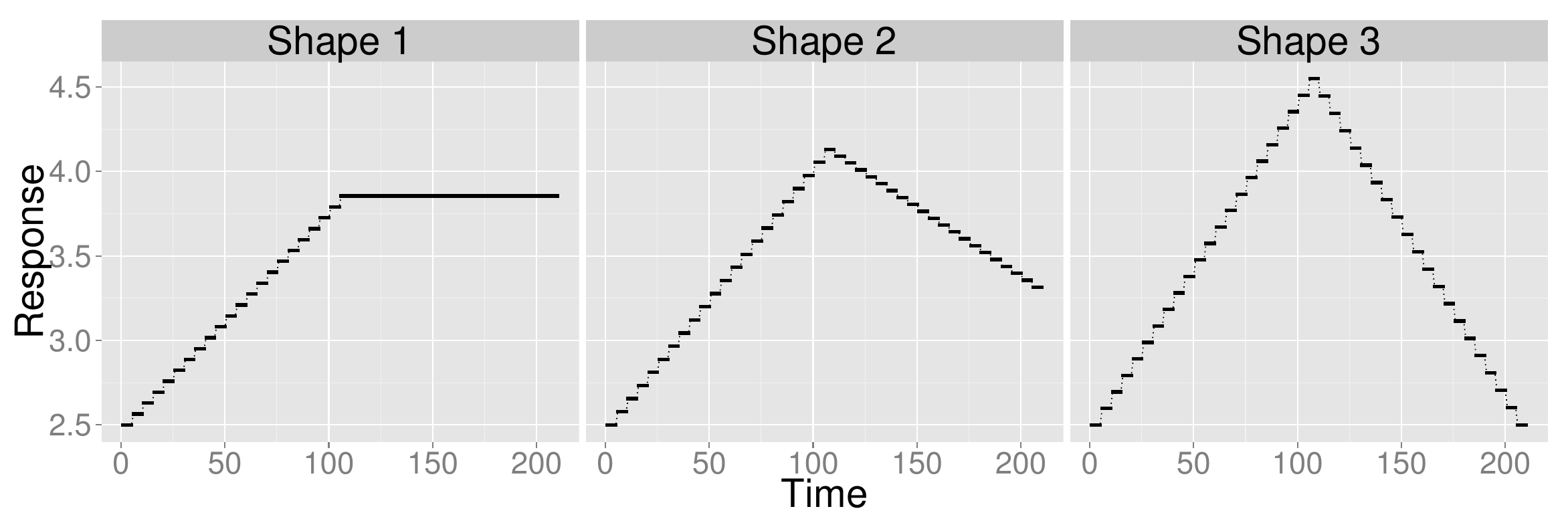}
	\caption{Conditional expectation of proximal response, $E[Y_{t+1}|I_t = 1]$. The horizontal axis is the decision time point. The vertical axis is $E[Y_{t+1}|I_t = 1]$.}		
	\label{ShapeOfAlpha}
\end{figure}

\begin{table}[H]
	\centering
	\begin{threeparttable}
		
		\caption{Simulated Type I error rate($\%$) and power ($\%$) when working assumption (a) is violated. Scenario 2.  The shapes of $\alpha(t)=E[Y_{t+1}|I_t=1]$ and patterns of availability are provided in Figure \ref{ShapeOfAlpha} and Figure \ref{ShapeOfTau}. The average availability is 0.5. The day of maximal proximal effect is 29.  The associated sample size is given in Table \ref{worktrueformula}. }	
		
		\begin{tabular}{|c|c|cccc|cccc|}

			&& 	\multicolumn{8}{c|}{Availability Pattern } \\
			\cline{3-10}
			$\alpha(t)$ &$\bar d$ & Pattern 1 & Pattern 2 & Pattern 3 & Pattern 4 & Pattern 1 & Pattern 2 & Pattern 3 & Pattern 4 \\

			\hline
			\multirow{3}{*}{Shape 1}& 0.10 & 3.6 & 4.3 & 4.7 & 4.5 & \textbf{77.4} & 80.2 &\textbf{ 76.2} & \textbf{75.9} \\ 
			& 0.08 & 5.9 & 3.8 & 4.1 & 3.4 & 79.7 & 80.1 & 78.9 & 80.6 \\ 
			& 0.06 & 4.6 & 5.7 & 4.2 & \textbf{6.5} & 78.7 & \textbf{76.3} & 78.3 & 79.9 \\ 
			\hline
			\multirow{3}{*}{Shape 2}& 0.10 & 4.8 & 4.8 & 4.4 & 4.1 & 79.2 & 79.1 & 78.5 & 79.7 \\ 
			& 0.08 & 3.9 & 5.4 & 4.8 & 4.3 & 77.7 & 80.4 & \textbf{76.8} & 80.9 \\ 
			& 0.06 & 5.1 & 5.5 & 3.4 & 4.9 & 78.3 & 79.4 & 79.8 & 80.2 \\ 
			\hline
			\multirow{3}{*}{Shape 3}& 0.10 & 5.1 & 3.5 & 4.3 & 4.4 & 79.1 & 79.4 & \textbf{75.6} & 78.0 \\ 
			& 0.08 & 4.6 & 5.0 & 6.2 & 3.8 & 78.3 & 78.1 & 79.1 & 78.1 \\ 
			& 0.06 & 4.8 & 4.4 & 5.4 & 4.2 & 78.0 & 78.3 & 79.8 & 77.7 \\ 			
			\hline
		\end{tabular}
		\begin{tablenotes}
			\item   $\bar d=(1/T)\sum_{t=1}^{T}Z_t'd$ is the average standardized treatment effect. Bold numbers are significantly (at .05 level) greater than .05 (for type I error rate) and lower than 0.80(for power).
		\end{tablenotes}	
		\label{workfalsea2}
	\end{threeparttable}
\end{table}

\begin{figure}[H]
	\centering
	\includegraphics[width = 1\linewidth]{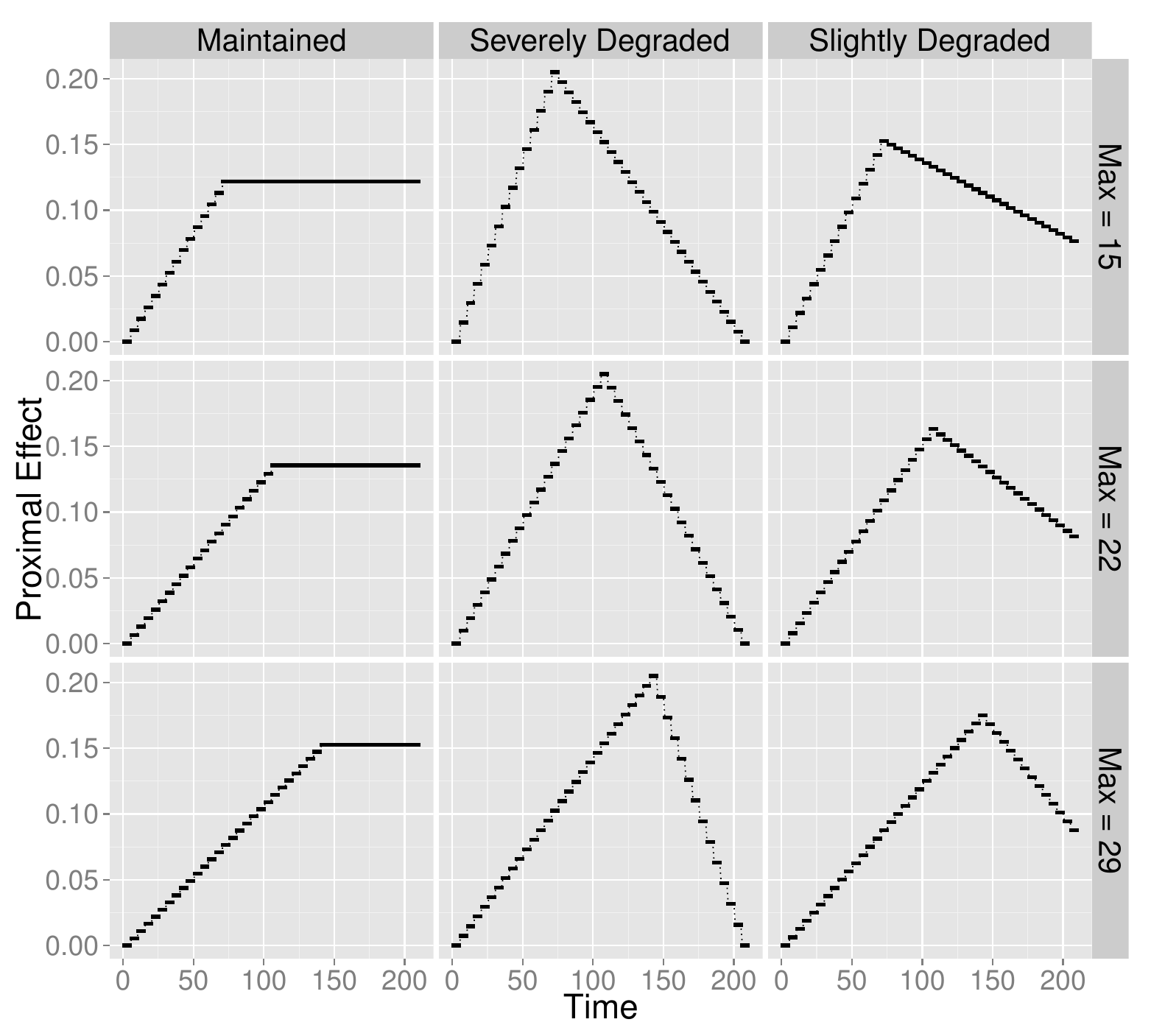}
	\caption{Proximal Main Effects of Treatment, $\{d(t)\}_{t=1}^T$: representing maintained, slightly degraded and severely degraded time-varying treatment effects. The horizontal axis is the decision time point. The vertical axis is the standardized treatment effect. The "Max" in the title refers to the day of maximal effect. The average standardized proximal effect is $0.1$ in all plots.}	
	\label{ShapeOfBetafull}
\end{figure}

\begin{table}[H]
	\centering
	\begin{threeparttable}	
		\caption{Sample Sizes when working assumption (b) is violated. The vector of standardized effects sizes, $d$, used in the sample size formula provides the projection of $d(t)$. The sample size formula is used with the correct availability pattern, $\{E[I_t]\}_{t=1}^T$. The shape of the standardized proximal effect, $d(t)=\beta(t)/\bar\sigma$ and pattern for availability, $E[I_t]$ are provided in Figure~\ref{ShapeOfBetafull} and in Figure~(\ref{ShapeOfTau}). The significance level is 0.05. The desired power is 0.80.  }
		
		\begin{tabular}{|c|c|c|ccc|ccc|}

			& & & \multicolumn{3}{c|}{$\bar \tau$ = 0.5} & \multicolumn{3}{c|}{$\bar \tau$ = 0.7} \\
			\cline{4-9}		
			& Availability & & \multicolumn{6}{c|}{Shape of $d(t)$}\\
			\cline{4-9}
			$\bar d$& Pattern & Max  & \multirow{2}{*}{Maintained} & \multirow{2}{4.5em}{\ \ Slightly Degraded}  & \multirow{2}{4.5em}{\ Severely Degraded} & \multirow{2}{*}{Maintained} & \multirow{2}{4.5em}{\ \ Slightly Degraded}  & \multirow{2}{4.5em}{\ Severely Degraded}\\
			& & & & & & & & \\	
			\hline
			\multirow{12}{*}{0.10} & & 15 &  43 &  41 &  39 &  32 &  31 &  29 \\
			& Pattern 1 & 22 &  43 &  41 &  40 &  33 &  31 &  30 \\
			& & 29 &  38 &  37 &  38 &  29 &  28 &  29 \\
			\cline{2-9}
			& & 15 &  43 &  41 &  39 &  33 &  31 &  30 \\
			& Pattern 2 & 22 &  43 &  42 &  40 &  33 &  31 &  30 \\
			& & 29 &  38 &  37 &  38 &  29 &  28 &  29 \\
			\cline{2-9}
			& & 15 &  45 &  43 &  41 &  33 &  32 &  31 \\
			& Pattern 3 & 22 &  44 &  43 &  42 &  33 &  32 &  31 \\
			& & 29 &  37 &  38 &  39 &  28 &  28 &  29 \\
			\cline{2-9}
			& & 15 &  42 &  39 &  37 &  32 &  30 &  28 \\
			& Pattern 4 & 22 &  44 &  41 &  39 &  33 &  31 &  30 \\
			& & 29  &  39 &  38 &  38 &  29 &  28 &  28 \\
			\hline

			\multirow{12}{*}{0.08 } & & 15 &  65 &  61 &  58 &  48 &  45 &  43 \\
			& Pattern 1 & 22 &  65 &  62 &  60 &  48 &  46 &  44 \\
			& & 29 &  56 &  55 &  56 &  42 &  41 &  42 \\
			\cline{2-9}
			& & 15 &  65 &  61 &  59 &  48 &  45 &  43 \\
			& Pattern 2 & 22 &  65 &  62 &  60 &  48 &  46 &  44 \\
			& & 29 &  56 &  55 &  56 &  42 &  41 &  42 \\
			\cline{2-9}
			
			& & 15 &  67 &  64 &  62 &  49 &  47 &  45 \\
			& Pattern 3 & 22 &  66 &  64 &  63 &  48 &  47 &  46 \\
			& & 29 &  56 &  56 &  59 &  41 &  41 &  43 \\
			\cline{2-9}
			
			& & 15 &  63 &  59 &  55 &  47 &  44 &  41 \\
			& Pattern 4 & 22 &  65 &  61 &  58 &  48 &  45 &  43 \\
			& & 29  &  58 &  56 &  56 &  43 &  41 &  41 \\
			\hline
			
			\multirow{12}{*}{0.06 } & & 15 & 111 & 105 & 100 &  81 &  76 &  73 \\
			& Pattern 1 & 22 & 112 & 106 & 103 &  81 &  77 &  75 \\
			& & 29 &  96 &  94 &  96 &  70 &  69 &  70 \\
			\cline{2-9}
			
			& & 15  & 112 & 105 & 100 &  81 &  77 &  73 \\
			& Pattern 2 & 22 & 112 & 106 & 103 &  81 &  77 &  75 \\
			& & 29 &  96 &  94 &  96 &  70 &  68 &  70 \\
			\cline{2-9}
			
			& & 15  & 116 & 111 & 106 &  83 &  79 &  76 \\
			& Pattern 3 & 22 & 114 & 110 & 108 &  82 &  79 &  78 \\
			& & 29 &  95 &  96 & 101 &  69 &  69 &  72 \\
			\cline{2-9}
			
			& & 15 & 108 & 100 &  94 &  79 &  74 &  70 \\
			& Pattern 4 & 22 & 112 & 105 &  99 &  81 &  76 &  73 \\
			& & 29 & 100 &  95 &  95 &  72 &  69 &  70 \\		
			\hline
		\end{tabular}
		\begin{tablenotes}
			\item   \lq\lq Max\rq\rq is the day in which the maximal proximal effect is attained. $\bar d=(1/T)\sum_{t=1}^{T}Z_t'd$ is the average standardized treatment effect.
		\end{tablenotes}
		\label{workfalsecase1SampleSize}	
	\end{threeparttable}
	
\end{table}

\begin{table}[H]
	\centering
	\begin{threeparttable}
		
		\caption{Simulated power($\%$) when working assumption (b) is violated. The shape of the standardized proximal effect, $d(t)=\beta(t)/\bar\sigma$ and pattern for availability, $E[I_t]$ are provided in Figure~\ref{ShapeOfBetafull} and in Figure~(\ref{ShapeOfTau}). The corresponding sample sizes are given in Table \ref{workfalsecase1SampleSize}.}
		
		\begin{tabular}{|c|c|c|ccc|ccc|}

			& & & \multicolumn{3}{c|}{$\bar \tau$ = 0.5} & \multicolumn{3}{c|}{$\bar \tau$ = 0.7} \\
			\cline{4-9}		
			& Availability & & \multicolumn{6}{c|}{Shape of $d(t)$}\\
			\cline{4-9}
			$\bar d$& Pattern & Max  & \multirow{2}{*}{Maintained} & \multirow{2}{4.5em}{\ \ Slightly Degraded}  & \multirow{2}{4.5em}{\ Severely Degraded} & \multirow{2}{*}{Maintained} & \multirow{2}{4.5em}{\ \ Slightly Degraded}  & \multirow{2}{4.5em}{\ Severely Degraded}\\
			& & & & & & & & \\		
			\hline
			
			\multirow{12}{*}{0.10} & & 15 & 78.4 & 78.8 & 78.6 & 79.1 & 80.1 & 77.6 \\
			& Pattern 1 & 22 & 80.4 & 79.5 & 81.2 & 80.0 & \textbf{76.9} & 77.9 \\
			& & 29 & 80.4 & 79.2 & 78.9 & \textbf{77.3} & \textbf{76.8} & 81.1 \\
			\cline{2-9}
			& & 15 & 78.6 & 79.9 & 79.9 & 80.1 & 80.4 & 81.3 \\
			& Pattern 2 & 22 & 78.3 & 81.2 & 78.8 & 79.2 & 80.8 & 80.5 \\
			& & 29 & 77.9 & 80.8 & 79.3 & 78.1 & 77.7 & 82.2 \\
			\cline{2-9}
			& & 15 & 81.0 & 79.7 & \textbf{77.4} & 77.9 & 80.9 & 77.6 \\
			& Pattern 3 & 22 & 78.9 & 79.1 & 80.0 & 79.7 & 79.4 & \textbf{75.9} \\
			& & 29 & 80.9 & 77.5 & 77.7 & 80.6 & 79.2 & 78.5 \\
			\cline{2-9}
			& & 15 & 79.7 & 79.5 & 77.9 & 79.5 & 81.7 & 78.0 \\
			& Pattern 4 & 22 & 78.9 & 77.9 & 80.4 & 82.2 & 78.9 & 78.8 \\
			& & 29 & 77.9 & 79.7 & 79.0 & 78.0 & 80.2 & 80.8 \\

			\hline
			\multirow{12}{*}{0.08 } & & 15 & 80.5 & 79.5 & 78.6 & 80.6 & 79.2 & 78.7 \\
			& Pattern 1 & 22 & 78.9 & 78.7 & 78.8 & 78.9 & 80.7 & 80.3 \\
			& & 29 & \textbf{76.6} & 78.0 & 78.3 & 80.9 & 78.6 & 80.4 \\
			\cline{2-9}
			& & 15 & 81.0 & 79.3 & 78.7 & 82.0 & 80.5 & 80.1 \\
			& Pattern 2 & 22 & 82.4 & 80.6 & 80.0 & 78.0 & 79.6 & 79.4 \\
			& & 29 & 79.2 & \textbf{76.9} & 81.9 & 78.3 & 78.8 & 79.7 \\
			\cline{2-9}
			
			& & 15 & 78.2 & 81.6 & 80.9 & 79.1 & 79.2 & 77.5 \\
			& Pattern 3 & 22 & 80.9 & 79.5 & 78.6 & 79.2 & 78.3 & 81.4 \\
			& & 29  & 80.4 & 79.3 & 77.5 & 77.9 & 80.2 & 82.3 \\
			\cline{2-9}
			
			& & 15 & 79.4 & 79.4 & 78.1 & 78.6 & \textbf{77.4} & 78.8 \\
			& Pattern 4 & 22 & 81.3 & 78.4 & 78.4 & 80.6 & 79.4 & 80.4 \\
			& & 29 & 79.9 & 79.3 & 79.8 & 79.5 & 79.7 & 81.2 \\
			\hline
			\multirow{12}{*}{0.06 } & & 15 & 81.2 & 80.5 & 79.0 & 77.8 & 78.7 & 79.6 \\
			& Pattern 1 & 22 & 80.0 & 81.7 & 79.8 & 80.7 & 80.5 & 80.2 \\
			& & 29 & 81.2 & 78.7 & 79.2 & 81.2 & 79.7 & 80.1 \\
			\cline{2-9}
			
			& & 15 & 78.7 & 77.5 & 81.4 & 80.7 & 81.0 & 80.7 \\
			& Pattern 2 & 22 & 80.6 & 81.8 & 79.2 & 80.3 & 81.6 & 80.2 \\
			& & 29 & 78.5 & 80.2 & 80.0 & 77.7 & 78.1 & 78.0 \\
			\cline{2-9}
			
			& & 15 & 78.1 & 80.0 & 80.9 & 79.7 & 79.3 & 78.8 \\
			& Pattern 3 & 22 & 81.2 & 80.2 & 80.0 & 78.3 & 82.2 & 81.1 \\
			& & 29 & 79.6 & 81.6 & 79.8 & 80.2 & 81.6 & \textbf{76.9} \\
			\cline{2-9}
			
			& & 15 & 78.2 & 79.8 & 78.9 & 79.5 & \textbf{77.3} & 79.2 \\
			& Pattern 4 & 22 & 79.2 & 81.1 & 79.4 & \textbf{76.8} & 79.2 & 80.4 \\
			& & 29 & 79.9 & 78.5 & 79.8 & 80.1 & 78.9 & 81.8 \\
			\hline
		\end{tabular}
		\begin{tablenotes}
			
			\item    \lq\lq Max\rq\rq is the day in which the maximal proximal effect is attained. $\bar d=(1/T)\sum_{t=1}^{T}Z_t'd$ is the average standardized treatment effect.  Bold numbers are significantly (at .05 level) lower than 0.80.
			
		\end{tablenotes}	
		\label{workfalsecase1full}
	\end{threeparttable}
\end{table}

\begin{table}[H]
	\centering
	\begin{threeparttable}	
		\caption{Simulated Type I error rate($\%$) and power($\%$) when working assumption (c) is violated. The trends of $\bar \sigma_t$ are provided in Figure~\ref{TrendofSigma}. The standardized average effect is 0.1.  $E[I_t] =0.5$. The associated sample sizes are 41 and 42 when the day of maximal effect is 22 and 29.}
		\begin{tabular}{|c|c|cccc|cccc|}

			& &
			&
			\multicolumn{2}{c}{Max = 22} & & &
			\multicolumn{2}{c}{Max = 29} &
			\\
			$\phi$ in AR(1) & $\frac{\sigma_{1t}}{\sigma_{0t}}$& const.& trend 1 & trend 2 & trend 3 & const.& trend 1 & trend 2 & trend 3 \\
			
			\hline
			& 0.8 & 4.1 & 4.3 & 3.3 & 5.4 & 4.7 & 4.9 & 2.8 & 4.1 \\
			-0.6 & 1.0 & 4.6 & 5.0 & 4.0 & 4.4 & 4.4 & 4.8 & 4.2 & 4.3 \\
			& 1.2 & 3.8 & 4.5 & 5.2 & 5.5 & 4.3 & 4.1 & 4.5 & 3.8 \\
			\hline
			& 0.8 & 5.2 & 4.7 & 4.0 & 3.4 & 5.4 & 4.9 & 6.2 & 4.5 \\
			-0.3& 1.0  & 4.9 & 4.5 & 4.5 & 4.3 & 5.2 & 5.1 & 4.0 & 3.7 \\
			& 1.2 & 5.4 & 4.6 & 4.1 & 3.8 & 3.7 & 5.2 & 4.3 & 5.0 \\
			\hline
			& 0.8 & 4.8 & 4.0 & 4.1 & 3.9 & 4.7 & 5.2 & 3.7 & 4.2 \\
			0& 1.0 & 5.4 & 4.0 & 5.8 & 3.9 & 4.1 & 4.0 & 5.9 & 5.7 \\
			& 1.2 & 4.4 & 4.9 & 5.0 & 4.6 & 3.7 & 4.8 & 4.4 & 4.9 \\
			\hline
			& 0.8 & 5.3 & 4.4 & 4.7 & 3.2 & 4.6 & 5.4 & 5.6 & 4.1 \\
			0.3 & 1.0 & 5.5 & 4.0 & 3.4 & 3.7 & 5.0 & 4.6 & 4.0 & 3.6 \\
			& 1.2 & 3.8 & 4.5 & 4.5 & 4.8 & 4.5 & 5.0 & 6.2 & 4.3 \\
			\hline
			& 0.8 & 5.5 & 3.9 & 5.3 & 3.8 & 3.3 & 3.5 & 5.1 & 4.2 \\
			0.6 & 1.0 & 4.0 & 3.7 & 5.2 & 5.1 & 4.8 & 5.1 & 5.0 & 4.7 \\
			& 1.2 & 4.5 & 5.1 & 4.6 & 4.9 & 4.5 & 4.4 & 4.7 & 4.8 \\
			\hline
			\hline

			& 0.8 & 82.8 & 82.7 & 83.7 & 79.9 & 83.6 & 80.6 & 88.7 & 79.2 \\
			-0.6 & 1.0 & 81.1 & 79.1 & 79.9 &\textbf{ 74.8} & 77.7 & \textbf{74.3} & 84.8 & \textbf{70.4} \\
			& 1.2 & \textbf{76.6} & \textbf{76.3} & \textbf{76.3} & \textbf{70.6} & 77.6 & \textbf{72.0} & 80.7 & \textbf{70.4} \\
			\hline
			& 0.8 & 83.0 & 83.0 & 86.0 & 80.3 & 82.7 & 79.2 & 87.9 & 78.0 \\
			-0.3 & 1.0 & 77.6 & 81.4 & 80.7 & \textbf{74.9 }& 79.1 & \textbf{74.5} & 86.0 & \textbf{73.7} \\
			& 1.2 & 78.2 & \textbf{76.9} & \textbf{77.3} & \textbf{73.4} & \textbf{74.4} & \textbf{71.2} & 81.0 & \textbf{70.7 }\\
			\hline
			& 0.8 & 84.6 & 84.6 & 82.1 & 79.0 & 81.8 & 81.5 & 88.0 & 78.0 \\
			0 & 1.0 & 80.1 & 78.6 & 80.9 & \textbf{73.6} & 77.7 & \textbf{76.5} & 86.1 & \textbf{71.8 }\\
			& 1.2 & \textbf{76.0 }& \textbf{76.7} &\textbf{ 77.4} & \textbf{70.6} & \textbf{74.5 }& \textbf{69.9} & 83.4 & \textbf{69.6} \\
			\hline
			& 0.8 & 83.6 & 79.7 & 84.6 & 79.7 & 82.1 & 81.7 & 88.2 & \textbf{75.7} \\
			0.3 & 1.0 & 81.5 & 82.4 & 82.3 & \textbf{73.9} & 79.5 & \textbf{74.6} & 85.1 & \textbf{71.5} \\
			& 1.2 &\textbf{ 74.8} &\textbf{ 76.6} & 78.2 &\textbf{ 71.1 }&\textbf{ 75.5} &\textbf{ 71.1} & 82.5 & \textbf{70.1} \\
			\hline
			& 0.8 & 81.4 & 83.1 & 83.5 & 80.5 & 83.1 & \textbf{77.1 }& 86.6 &\textbf{ 76.9} \\
			0.6 & 1.0  & 80.7 & \textbf{76.4} & 79.0 & \textbf{74.8} & 80.4 & \textbf{73.4} & 84.7 & \textbf{76.8} \\
			& 1.2 &\textbf{ 77.0} & 77.5 & \textbf{77.0} & \textbf{73.5} & \textbf{74.4} & \textbf{72.5} & 81.6 & \textbf{69.4} \\
			\hline			
		\end{tabular}
		\begin{tablenotes}
			\item   $\phi$ is the parameter in AR(1) process for $\{\epsilon_t\}_{t=1}^T$. Bold numbers are significantly(at .05 level) greater than .05 (for type I error)and less than 0.80 (for power).
		\end{tablenotes}
		\label{workfalsecase2T}
	\end{threeparttable}
\end{table}

\begin{table}[H]
	\centering

	\begin{threeparttable}
			\caption{Simulated Type I error rate($\%$) when working assumption (d) is violated. $E[I_t ] = 0.5$. The proximal effect $Z_t'd$ satisfies the average is 0.1 and day of maximal effect is 29. N = 42.}
				
		\begin{tabular}{|c|c|ccc|}

			Parameters in $I_t$ &\diagbox{$\gamma_1$}{$\gamma_2$}&  -0.1 &  -0.2 &  -0.3 \\
			\hline
			&-0.2 & 5.7 & 3.2 & 3.9 \\
			$\eta_1 = -0.1, \eta_2 = -0.1$&-0.5 & 3.2 & 4.2 & 4.9 \\
			&-0.8 & 4.2 & 5.1 & 5.5 \\
			\hline
			&-0.2 & 5.4 & 3.8 & 3.9 \\
			$\eta_1 = -0.2, \eta_2 = -0.1$&-0.5 & 4.4 & 4.4 & 4.8 \\
			&-0.8 & 4.7 & 4.3 & 4.6 \\
			\hline
			&-0.2 & 4.5 & 5.0 & 5.0 \\
			$\eta_1 = -0.1, \eta_2 = -0.2$ &-0.5 & 4.9 & 3.8 & 6.0 \\
			&-0.8 & 4.7 & 4.8 & 4.8 \\
			\hline
		\end{tabular}
		\begin{tablenotes}
		
			\item   $\eta_1, \eta_2$ are parameters in generating $I_t$. $\gamma_1$, $\gamma_2$ are coefficients in the model of  $Y_{t+1}$. All numbers in this table are significantly (at .05 level) greater than .05.
		\end{tablenotes}
		\label{workfalsecase3T}		
	\end{threeparttable}

\end{table}
\vspace{-15mm}
\begin{table}[H]
	\centering
	
	\begin{threeparttable}
		\caption{Degradation in power when average proximal main effect is underestimated. The day of maximal treatment effect is attained at day 29 and the average availability is 0.5 in all cases. The associated sample sizes for each value of average treatment effect are provided in first column.}
		\begin{tabular}{|c|c|cccc|}

			\multirow{2}{6.5em}{\ $\bar d$ in Sample \ Size Formula}&\multirow{2}{*}{True $\bar d$}   &\multicolumn{4}{c|}{Availability Pattern } \\
			\cline{3-6}
			&  & Pattern 1 & Pattern 2 & Pattern 3 & Pattern 4\\
			
			\hline	
			\multirow{11}{*}{0.10 (N = 42)}& 0.098 & 76.2 & 78.9 & 77.6 & 78.6 \\
			& 0.096 & 75.1 & 74.6 & 78.8 & 74.0 \\
			& 0.094 & 73.7 & 70.7 & 75.4 & 73.4 \\
			& 0.092 & 71.5 & 71.6 & 73.2 & 71.6 \\
			& 0.090 & 68.9 & 68.4 & 69.6 & 67.3 \\
			& 0.088 & 65.4 & 65.6 & 66.1 & 65.7 \\
			& 0.086 & 66.4 & 67.9 & 65.2 & 66.7 \\
			& 0.084 & 62.3 & 63.4 & 63.0 & 59.6 \\
			& 0.082 & 60.0 & 60.2 & 60.5 & 58.2 \\
			& 0.080 & 58.9 & 59.8 & 57.8 & 61.4 \\
			\hline
			\multirow{11}{*}{0.08(N = 64)}& 0.078 & 78.2 & 80.2 & 76.8 & 75.8 \\
			& 0.076 & 77.3 & 76.7 & 76.2 & 75.4 \\
			& 0.074 & 73.1 & 72.2 & 71.2 & 71.4 \\
			& 0.072 & 70.7 & 71.0 & 69.4 & 68.2 \\
			& 0.070  & 68.2 & 66.0 & 65.2 & 66.1 \\
			& 0.068 & 65.5 & 64.3 & 64.6 & 65.7 \\
			& 0.066 & 62.8 & 62.3 & 61.8 & 59.4 \\
			& 0.064 & 61.9 & 58.5 & 59.5 & 62.1 \\
			& 0.062 & 53.9 & 52.6 & 57.0 & 56.9 \\
			& 0.060 & 54.6 & 51.1 & 54.8 & 53.4 \\
			\hline
			\multirow{11}{*}{0.06(N = 109)}& 0.058 & 75.6 & 76.9 & 74.0 & 78.1 \\
			& 0.056  & 73.9 & 73.1 & 73.1 & 72.7 \\
			& 0.054 & 68.6 & 71.1 & 69.3 & 68.5 \\
			& 0.052 & 65.4 & 69.4 & 63.6 & 66.8 \\
			& 0.050 & 61.0 & 62.8 & 64.1 & 63.2 \\
			& 0.048 & 57.4 & 58.6 & 56.4 & 56.1 \\
			& 0.046 & 53.6 & 53.4 & 52.9 & 54.8 \\
			& 0.044 & 52.0 & 48.9 & 50.1 & 53.0 \\
			& 0.042	& 45.7 & 43.9 & 44.9 & 46.4 \\
			& 0.040 & 40.4 & 42.2 & 42.3 & 42.7 \\ 	
			\hline
		\end{tabular}
		\label{wrongguessave}
	\end{threeparttable}
\end{table}

\begin{table}[H]
	\centering
	\begin{threeparttable}
		\caption{Degradation in power when average availability is underestimated. The day of maximal treatment effect is attained at day 29 and the average proximal main effect is 0.1 in all cases. The associated sample sizes are given in first column.}
		\begin{tabular}{|c|c|cccc|}

			$(1/T)\sum_{t=1}^{T}\tau_t$ in  & True &\multicolumn{4}{c|}{Availability Pattern } \\
			\cline{3-6}
			Sample Size Formula & $(1/T)\sum_{t=1}^{T}\tau_t$ & Pattern 1 & Pattern 2 & Pattern 3 & Pattern 4 \\
			\hline

			\multirow{11}{*}{0.5 (N = 42)}& 0.048  & 76.4 & 81.7 & 76.0 & 78.2 \\
			& 0.046 & 73.9 & 75.5 & 73.6 & 75.8 \\
			& 0.044 & 70.6 & 72.1 & 71.0 & 71.7 \\
			& 0.042 & 70.8 & 70.6 & 74.2 & 70.3 \\
			& 0.040  & 70.3 & 69.2 & 65.7 & 68.6 \\
			& 0.038  & 66.0 & 66.8 & 67.8 & 67.0 \\
			& 0.036  & 64.0 & 62.5 & 62.4 & 62.9 \\
			& 0.034 & 60.8 & 61.3 & 59.4 & 63.9 \\
			& 0.032& 56.4 & 59.2 & 54.7 & 59.8 \\
			& 0.030 & 51.4 & 53.1 & 51.9 & 54.5 \\
			\hline

			\multirow{11}{*}{0.7 (N = 32)}& 0.068 & 79.5 & 76.1 & 79.1 & 75.0 \\
			& 0.066 & 77.3 & 75.7 & 74.0 & 76.4 \\
			& 0.064 & 74.5 & 74.7 & 73.5 & 77.1 \\
			& 0.062 & 73.2 & 73.0 & 75.1 & 72.5 \\
			& 0.060 & 69.8 & 70.5 & 73.5 & 72.5 \\
			& 0.058 & 71.0 & 69.6 & 71.3 & 67.3 \\
			& 0.056  & 68.8 & 70.3 & 66.6 & 64.0 \\
			& 0.054  & 68.1 & 65.8 & 65.3 & 68.6 \\
			& 0.052 & 62.4 & 64.9 & 65.6 & 62.9 \\
			& 0.050  & 60.6 & 63.3 & 62.8 & 61.4 \\ 		
			\hline
		\end{tabular}
		\label{wrongguessavetau}
	\end{threeparttable}
\end{table}
\end{appendices}

\end{document}